\newcommand{\cut}[1]{}
\newcommand{\introparagraph}[1]{\textbf{#1.}}        
\newtheorem{theorem}{Theorem}[section]          	
\newaliascnt{lemma}{theorem}				
\newtheorem{lemma}[lemma]{Lemma}              	
\newaliascnt{conjecture}{theorem}			
\newaliascnt{remark}{theorem}				
\newaliascnt{corollary}{theorem}			
\newtheorem{corollary}[corollary]{Corollary}      
\newaliascnt{definition}{theorem}			
\newaliascnt{proposition}{theorem}			
\newaliascnt{example}{theorem}			
\newtheorem{example}[example]{Example}  	
\providecommand{\bx}[0]{\mathbf{x}}
\providecommand{\by}[0]{\mathbf{y}}
\providecommand{\bM}[0]{\mathbf{M}}
\providecommand{\bv}[0]{\mathbf{v}}
\providecommand{\bu}[0]{\mathbf{u}}
\providecommand{\ba}[0]{\mathbf{a}}
\providecommand{\E}[0]{\mathbb{E}}
\newcommand{\set}[1]{\{#1\}}                    
\newcommand{\setof}[2]{\{{#1}\mid{#2}\}}        
\newcommand{\mA}[0]{\mathcal{A}} 
\newcommand{\mB}[0]{\mathcal{B}}
\newcommand{\mM}[0]{\mathcal{M}}
\newcommand{\msg}[0]{{\textsf{msg}}}
\newcommand{\mpc}[0]{MPC}
\newcommand{\vars}[1]{\textsf{vars}(#1)}
\newcommand{\atoms}[1]{\textsf{atoms}(#1)}
\newcommand{\pk}[1]{\textsf{pk}(#1)}
\newcommand{\qpk}[1]{\textsf{pk}^+(#1)}
\begin{document}

\title{Worst-Case Optimal Algorithms for Parallel Query Processing\thanks{This work is partially supported by NSF IIS-1247469,  AitF 1535565, CCF-1217099 and CCF-1524246.}}

\author[*]{Paul Beame}
\author[$\dag$]{Paraschos Koutris}
\author[*]{Dan Suciu}
\affil[*]{University of Washington, Seattle, WA}
\affil[$\dag$]{University of Wisconsin-Madison, Madison, WI}

\maketitle

\begin{abstract}
In this paper, we study the communication complexity for the problem of
computing a conjunctive query on a large database in a parallel setting with
$p$ servers. In contrast to previous work, where upper and lower bounds on
the communication were specified for particular structures of data (either
data without skew, or data with specific types of skew), in this work we
focus on {\em worst-case analysis} of the communication cost. The goal is to find
worst-case optimal parallel algorithms, similar to the work of~\cite{NPRR12}
for sequential algorithms. 

We first show that for a single round we can obtain an optimal worst-case 
algorithm. The optimal load for a conjunctive query $q$ when all relations have
size equal to $M$ is $O(M/p^{1/\psi^*})$, where $\psi^*$ is a new query-related quantity
called the {\em edge quasi-packing number}, which is different from both the
edge packing number and edge cover number of the query hypergraph.
For multiple rounds, we present algorithms that are optimal for 
several classes of queries. Finally, we show a surprising 
connection to the external memory model, which allows us to translate
parallel algorithms to external memory algorithms. This technique allows
us to recover (within a polylogarithmic factor) several recent results on the
I/O complexity for computing join queries, and also obtain optimal
algorithms for other classes of queries.  
\end{abstract}

\section{Introduction}
\label{sec:intro}

The last decade has seen the development and widespread use of massively parallel systems that perform 
data analytics tasks over big data: examples of such systems are MapReduce~\cite{DBLP:conf/osdi/DeanG04},
Dremel~\cite{dremel}, Spark~\cite{spark:2012} and Myria~\cite{myria}. In contrast to traditional database
systems, where the computational complexity is dominated by the disk access time, the data now typically fits in main 
memory, and the dominant cost becomes that of communicating data and synchronizing among the servers in the cluster.

In this paper, we present a {\em worst-case analysis} of algorithms
for processing of conjunctive queries (multiway join queries) on such massively
parallel systems. Our analysis is based on the Massively Parallel Computation model, 
or \mpc~\cite{BKS13,BKS14}. \mpc\ is a theoretical model where the computational
complexity of an algorithm is characterized by both the {\em number of rounds} (so the number
of synchronization barriers) and 
the maximum amount of data, or {\em maximum load}, that each processor receives at
every round.

The focus of our analysis on worst-case behavior of algorithms is a fundamentally
different approach from previous work, where optimality of a parallel algorithm
was defined for a specific input, or a specific family of inputs. Here we obtain upper 
bounds on the load of the algorithm across all possible types of input data. 
To give a concrete example, consider the simple join between two binary relations $R$ and $S$ 
of size $M$ in bits (and $m$ tuples), denoted $q(x,y,z) = R(x,z), S(y,z)$, and suppose that
the number of servers is $p$. 
In the case where there is no data skew (which means in our case that the frequency of 
each value of the $z$ variable in both $R$ and $S$ is at most $m/p$), it has been shown in
\cite{BKS14} that the join can be computed in a single round with load 
$\tilde{O}(M/p)$ (where the notation $\tilde{O}$ hides a polylogarithmic factor depending on $p$),
by simply hashing each tuple according to the value
of the $z$ variable. However, if the $z$ variable is heavily skewed both
in $R$ and $S$ (and in particular if there exists a single value of $z$),
computing the query becomes equivalent to computing a cartesian product, 
for which we need $\Omega(M/p^{1/2})$ load. In this scenario, although for
certain instances we can obtain better guarantees for the load, the heavily skewed instance is a
{\em worst-case input}, in the sense that the lower bound $\Omega(M/p^{1/2})$ specifies
the worst possible load that we may encounter. Our goal is to design algorithms for single or
multiple rounds that are optimal with respect to such worst-case inputs and
never incur larger load for any input. \\[-2mm]

\noindent \textbf{Related Work.}
Algorithms for joins in the \mpc\ model were previously analyzed
in~\cite{BKS13, BKS14}. In~\cite{BKS13}, the authors presented algorithms for one and multiple
rounds on input data without skew (in particular when each value appears exactly once in
each relation, which is a called a {\em matching database}). 
In~\cite{BKS14}, the authors showed that the HyperCube (HC) algorithm, first
presented by Afrati and Ullman~\cite{DBLP:conf/edbt/AfratiU10}, can optimally compute any
conjunctive query for a single round on data without skew. The work in~\cite{BKS14} also presents one-round
algorithms and lower bounds for skewed data but the upper and lower bounds do not necessarily coincide.

Several other computation models have been proposed in order to understand
the power of MapReduce and related massively parallel programming
paradigms~\cite{DBLP:journals/talg/FeldmanMSSS10,DBLP:conf/soda/KarloffSV10,DBLP:conf/pods/KoutrisS11,DBLP:journals/corr/abs-1206-4377}. All these models identify the number of communication steps/rounds as a
main complexity parameter, but differ in their treatment of the
communication. Previous work~\cite{WZ13,Klauck15} has also focused on computing 
various graph problems in message-passing parallel models. In contrast to this work, where
we focus on algorithms that require a constant number of rounds, the authors consider algorithms
that need a large number of rounds.

Our setting and worst-case analysis can be viewed as the analogous version of the
work of Ngo et al.~\cite{NPRR12} on worst-case optimal algorithms for multiway join 
processing. As we will show later, the worst-case instances for a given query $q$ are
different for the two settings in the case of one round, but coincide for all the families of
queries we examine when we consider multiple rounds. \\[-2mm]

\noindent \textbf{Our Contributions.}
We first present in Section~\ref{sec:one-step} tight upper and lower bounds for the 
worst-case load of one-round algorithms for any full conjunctive query $q$ without 
self-joins.\footnote{The restriction to queries without self-joins is not limiting, since we
can extend our result to queries with self-joins (by losing a constant factor)
by treating copies of a relation as distinct relations.
The parallel complexity for queries with projections is however an open question.} 
The optimal algorithm
uses a different parametrization (share allocation) of the HyperCube algorithm for different parts
of the input data, according to the values that are skewed. In the case where all relation sizes
are equal to $M$, the algorithm
achieves an optimal load $\tilde{O}(M/p^{1/\psi^*(q)})$, where $\psi^*(q)$ is the {\em edge quasi-packing number}
of the query $q$. An edge quasi-packing is an edge packing on any vertex-induced projection of
the query hypergraph (in which we shrink hyperedges when we remove
vertices). 

In Section~\ref{sec:multi-step}, we show that for any full conjunctive query $q$, any algorithm with a constant number of rounds
requires a load of $\Omega(M/p^{1/\rho^*})$, where $\rho^*$ is the {\em edge cover number}.
We then present optimal (within a polylogarithmic factor) multi-round algorithms for several classes of join queries.
Our analysis shows that some queries (such as the star query
$T_k$) can be optimally computed using the optimal single-round algorithm from
Section~\ref{sec:one-step}. However, other classes of queries, such as the cycle query
$C_k$ for $k\ne 4$, the line query $L_k$, or the Loomis-Whitney join $LW_k$ require 2 or more rounds to achieve the optimal load. For example, we present an algorithm for the full query (or clique) $K_k$ that uses $k-1$ rounds to achieve the optimal load (although it is open whether only 2 rounds are sufficient).

Finally, in Section~\ref{sec:ex-memory} we present a surprising application of our results in 
the setting of external memory algorithms. In this setting, the input data does not fit into main memory,
and the dominant cost of an algorithm is the I/O complexity: reading the data from the disk into 
the memory and  writing data on the disk.
 In particular, we show that we can simulate an \mpc\ algorithm in the 
external memory setting, and obtain almost-optimal (within a polylogarithmic factor)
external memory algorithms for computing triangle queries; the same technique
can be easily applied to other classes of queries.  

\section{Background}
\label{sec:background}

In this section, we introduce the \mpc\ model and present the necessary terminology and
technical tools that we will use later in the paper.

\subsection{The MPC Model}

We first review the {\em Massively Parallel Computation model (MPC)}, which allows us
to analyze the performance of algorithms in parallel environments.  In the
\mpc\ model, computation is performed by $p$ servers, or processors, 
connected by a complete network of private channels. 
The computation proceeds in {\em steps}, or {\em rounds}, where each round consists of 
two distinct phases. In the {\em communication phase}, the 
servers exchange data, each by communicating with all other servers.
In the {\em computation phase}, each server performs only local computation.

The input data of size $M$ bits is initially uniformly partitioned among the $p$ servers, that is, 
each server stores $M/p$ bits of data. 
At the end of the execution, the output must be present in the union of the output of the $p$ processors. 

The execution of a parallel algorithm in the MPC model is captured by two
parameters. The first parameter is the {\em number of rounds} $r$ that
the algorithm requires. The second parameter is the {\em maximum load} $L$,
which measures the maximum amount of data (in bits) received by any server during any
round.

All the input data will be distributed during some round, since we need to perform some computation
on it. Thus, at least one server will receive at least data of size $M/p$.
On the other hand, the maximum load will never exceed $M$, since
any problem can be trivially solved in one round by simply sending the
entire data to server 1, which can then compute the answer locally.  Our
typical loads will be of the form $M/p^{1-\varepsilon}$, for some parameter $\varepsilon$ 
($0\leq \varepsilon < 1$) that depends on the query.  For a similar
reason, we do not allow the number of rounds to reach $r = p$, because
any problem can be trivially solved in $p$ rounds by sending 
$M/p$ bits of data at each round to server 1, until this server accumulates
the entire data.  In this paper we only consider the case $r = O(1)$.

\subsection{Conjunctive Queries}

In this paper we focus on a particular class of problems for the \mpc\
model, namely computing answers to conjunctive queries over a
database. We fix an input vocabulary $S_1, \ldots, S_\ell$, where
each relation $S_j$ has a fixed arity $a_j$; we denote $a = \sum_{j
  =1}^{\ell} a_j$. The input data consists of one relation instance
for each symbol.  

We consider full conjunctive queries (CQs) without self-joins, denoted
as follows:
$$q(x_1, \dots, x_k) = S_1(\dots), \dots, S_\ell(\dots).$$
The query is {\em full}, meaning that every variable in the body
appears in the head (for example $q(x) = S(x,y)$ is not full), and
{\em without self-joins}, meaning that each relation name $S_j$
appears only once (for example $q(x,y,z) = S(x,y), S(y,z)$ has a
self-join). 
We use $\vars{S_j}$ to denote the set of
variables in the atom $S_j$, and $\vars{q}$ to denote the set of variables
in all atoms of $q$. 
Further, 
$k$ and $\ell$ denote the number of variables and atoms in $q$ respectively.
The {\em hypergraph} of a conjunctive query $q$ is defined by introducing one node
for each variable in the body and one hyperedge for each set of
variables that occur in a single atom.

The {\em fractional edge packing} associates a non-negative weight $u_j$ to each
atom $S_j$ such that for every variable $x_i$, the sum of the weights for
the atoms that contain $x_i$ does not exceed 1.
We let $\pk{q}$ denote the set of all fractional edge packings for $q$.
The {\em fractional covering number $\tau^*$} is the maximum sum of weights over all possible
edge packings, $\tau^*(q) = \max_{\bu \in \pk{q}} \sum_j u_j$.

The {\em fractional edge cover} associates a non-negative weight $w_j$ to
each atom $S_j$, such that for every variable $x_i$, the sum of the weights of
the atoms that contain $x_i$ is at least 1. The {\em fractional edge cover number $\rho^*$}
is the minimum sum of weights over all possible fractional edge covers.
The notion of the fractional edge cover has
been used in the literature~\cite{DBLP:conf/focs/AtseriasGM08, NPRR12}
to provide lower bounds on the worst-case output size of a query (and consequently
the running time of join processing algorithms).

For any $\bx \subseteq \vars{q}$, we define the {\em residual query} $q_\bx$ as the query obtained from $q$ by removing
all variables $\bx$, and decreasing the arity of each relation accordingly (if the arity
becomes zero we simply remove the relation). For example, for the {\em triangle query}
$q(x,y,z) = R(x,y), S(y,z), T(z,x)$,
the residual query $q_{\{x\}}$ is $q_{\{x\}}(y,z) = R(y), S(y,z), T(z)$. Similarly, $q_{\{x,y\}}(z) =
S(z), T(z)$. Observe that every fractional edge packing of $q$ is
also a fractional edge packing of any residual query $q_\bx$, but the converse is not
true in general.  

We now define the {\em fractional edge quasi-packing} to be any edge packing of a residual 
query $q_\bx$ of $q$, where the atoms that have only
variables in $\bx$ get a weight of 0. Denote by
$\qpk{q}$ the set of all edge quasi-packings.
It is straightforward to see that $\pk{q} \subseteq \qpk{q}$; in other words, any packing is
a quasi-packing as well. The converse is
not true, since for example $(1,1,0)$ is a quasi-packing for the triangle query,
but not a packing. The {\em edge quasi-packing number} $\psi^*$ is the maximum sum
of weights over all edge quasi-packings:
$$\psi^*(q) =  \max_{\bu \in \qpk{q}} \sum_j u_j = \max_{\bx \subseteq \vars{q}} \max_{\bu \in \pk{q_\bx}} \sum_j u_j $$

\subsection{Previous Results}

Suppose that we are given a full CQ $q$, and input such that relation $S_j$ has size 
$M_j$ in bits (we use $m_j$ for the number of tuples). 
Let $\bM = (M_1, \dots, M_\ell)$ be the vector of the relation sizes.
For a given fractional edge packing $\bu \in \pk{q}$, we define as in~\cite{BKS14}:
\begin{align} 
  \quad \quad 
  L(\bu, \bM,p)   =  \left( \frac{ \prod_{j=1}^\ell M_j^{u_j}}{p} \right)^{1/\sum_{j=1}^\ell u_j} \label{eq:lx} 
\end{align}

Let us also define $L^{(q)}(\bM, p)  = \max_{\bu \in \pk{q}} L(\bu, \bM, p) $.
In our previous work~\cite{BKS14}, we showed that any algorithm that computes 
$q$ in a single round with $p$ servers must have load
$L \geq L^{(q)}(\bM, p)$. The instances used to prove this lower bound is
the class of {\em matching databases}, which are instances where each value appears
exactly once in the attribute of each relation. 
Hence, the above lower bound is not necessarily tight; indeed, as we will
see in the next section, careful choice of skewed input instances can lead to a stronger 
lower bound. \\

\noindent {\bf The HyperCube algorithm.}
To compute conjunctive queries in the \mpc\ model, we use the basic primitive
of the {\em HyperCube (HC) algorithm}.  The algorithm was
first introduced by Afrati and Ullman~\cite{DBLP:conf/edbt/AfratiU10},
and was later called the {\em shares} algorithm;  we use the name HC
to refer to the algorithm with a particular choice of shares.
The HC algorithm initially assigns to each variable $x_i$
 a {\em share} $p_i$, such that $\prod_{i=1}^k p_i =
p$. Each server is then represented by a distinct point $\mathbf{y}
\in \mathcal{P}$, where $\mathcal{P} = [p_1] \times \dots \times
[p_k]$; in other words, servers are mapped into a $k$-dimensional
hypercube. The HC algorithm then uses $k$ independently chosen hash
functions $h_i: \set{1,\dots,n} \rightarrow \set{1, \dots, p_i}$ (where $n$ is the domain size)
and sends each tuple $t$ of
relation $S_j$ to all servers in the destination subcube of $t$:
$$
 \mathcal{D}(t) = \setof{\mathbf{y} \in \mathcal{P}}
 {\forall x_i \in \vars{S_j}:  h_{i}(t[x_{i}]) = \mathbf{y}_{i}}
$$
where $t[x_i]$ denotes the value of tuple $t$ at the position of the variable $x_i$.
After the tuples are received,  each server locally computes $q$ for the subset of the input that it has received.  

If the input data has no skew, the above vanilla version of the HC algorithm is optimal 
for a single round. The lemma
below presents the specific conditions that define skew, and will be frequently used
throughout the paper.  

\begin{lemma}[Load Analysis for HC~\cite{BKS14}]
\label{lem:hashing}
Let $\mathbf{p} = (p_1, \dots, p_k)$ be the optimal shares of the HC algorithm.
Suppose that for every relation $S_j$ and every tuple $t$ over the attributes $U \subseteq [a_j]$ we have 
that the frequency of $t$ in relation $S_j$ is $m_{S_j}(t) \leq  m_j / \prod_{i \in U}p_i$.
Then with high probability the maximum load per server  is $\tilde{O}(L^{(q)}(\bM,p))$.
\end{lemma}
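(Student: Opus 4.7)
The plan is to fix an arbitrary server $\mathbf{y} \in \mathcal{P}$, bound the probability that its load exceeds a suitable multiple of $L^{(q)}(\bM,p)$, and then union-bound over all $p$ servers. First I would compute the expected load. For a relation $S_j$ and tuple $t \in S_j$, let $X_t$ be the indicator that $t$ is routed to $\mathbf{y}$; because the $k$ hash functions are independent, $\E[X_t] = \prod_{x_i \in \vars{S_j}} p_i^{-1}$. Summing over $t$, the expected number of tuples of $S_j$ arriving at $\mathbf{y}$ is $\mu_j := m_j / \prod_{x_i \in \vars{S_j}} p_i$, so in bits the expectation is $M_j / \prod_{x_i \in \vars{S_j}} p_i$. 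The optimal HC shares are chosen as a solution of the LP that minimises $\max_j M_j / \prod_{x_i \in \vars{S_j}} p_i$ subject to $\prod_i p_i \leq p$; by the standard LP duality linking this programme to the fractional edge packing LP, the optimum equals $L^{(q)}(\bM,p) = \max_{\bu \in \pk{q}} L(\bu,\bM,p)$, so every per-relation expected load is at most $L^{(q)}(\bM,p)$.

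The substance of the proof is the concentration step, since the indicators $X_t$ for different tuples in the same relation are correlated: two tuples agreeing on some variable $x_i$ both depend on the single hash constraint $h_i(\cdot) = \mathbf{y}_i$. I would handle this by exposing the $k$ hash functions one variable at a time, building a Doob martingale whose final value is the load at $\mathbf{y}$. Azuma-style concentration then requires bounded differences, where the step at variable $x_i$ is controlled by the largest ``bucket'' of $S_j$-tuples that share a given value on $x_i$ (or, more generally, on the variables already revealed together with $x_i$). This is exactly where the frequency hypothesis is used: the bound $m_{S_j}(t) \leq m_j / \prod_{i \in U} p_i$ for every attribute subset $U$ caps every such bucket at a $1/\prod_{i \in U} p_i$ fraction of $m_j$, so after scaling the martingale differences are $\tilde{O}(\mu_j)$. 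An Azuma tail bound (or, equivalently, a higher-moment bound obtained from the same frequency hypothesis) then yields $\Pr[\text{load from }S_j > C \mu_j \log p] \leq p^{-\Omega(1)}$ for a suitable constant $C$.

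Finally, a union bound over the $p$ servers and the constantly many relations gives that with high probability every server receives total load at most $\tilde{O}(L^{(q)}(\bM,p))$ bits, absorbing the logarithmic overhead into the $\tilde{O}$ notation. The main obstacle is precisely the concentration step: the indicators $X_t$ are far from independent, and making the argument go through requires invoking the frequency hypothesis uniformly over every attribute subset $U \subseteq [a_j]$, not merely the extreme cases $U = [a_j]$ or $U = \emptyset$. Once one has the right order in which to expose the hashes and a clean way to track how the maximum surviving bucket shrinks after each revealed hash, the remainder is a direct calculation using the HC share LP and a standard union bound.
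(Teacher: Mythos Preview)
The paper does not prove this lemma: it is quoted verbatim from~\cite{BKS14} and used as a black box throughout, so there is no ``paper's own proof'' to compare against. Your sketch is therefore not in conflict with anything here, and the high-level structure you describe (expected load via the share LP, concentration, union bound over $p$ servers and $\ell$ relations) is indeed how the result is established in the cited source.

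One caution on the concentration step. The Doob-martingale/Azuma route you outline, exposing the hash functions $h_1,\dots,h_k$ one at a time, does not by itself give the right tail. Revealing all of $h_i$ in a single step can change the load at $\mathbf{y}$ by as much as $m_j/\prod_{i'\ne i} p_{i'}$ (all tuples whose non-$x_i$ coordinates already hash to $\mathbf{y}$ can swing together), so the Lipschitz constant is of the same order as $\mu_j$ itself, and Azuma over $k=O(1)$ steps yields only a constant-probability bound, not $p^{-\Omega(1)}$. The proof in~\cite{BKS14} instead fixes a relation $S_j$, expands the $r$-th moment of its load at $\mathbf{y}$ as a sum over $r$-tuples of records, and bounds each term using the independence of the $h_i$ together with the frequency hypothesis applied to the set $U$ of coordinates on which the $r$ records happen to agree; this is precisely why the assumption is needed for \emph{every} $U\subseteq[a_j]$. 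A Markov bound on the $r$-th moment with $r=\Theta(\log p)$ then gives the $p^{-\Omega(1)}$ tail. If you want to salvage a martingale argument, you would need to expose the hashes value by value (so $\sum_i n$ steps rather than $k$ steps) and control the quadratic variation using the same per-subset frequency bounds; that works but is essentially a repackaging of the moment computation.
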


\section{One-Round Algorithms}
\label{sec:one-step}

In this section, we present tight upper and lower bounds for the
worst-case load of one-round algorithms that compute
conjunctive queries. Thus, we identify the database instances for
which the behavior in a parallel setting is the worst possible. Surprisingly,
these instances are often different from the ones that provide a worst-case
running time in a non-parallel setting.  

As an example, consider the triangle query $C_3 = R(x,y), S(y,z), T(z,x)$, where
all relations have $m$ tuples (and $M$ in bits). It is known from~\cite{DBLP:conf/focs/AtseriasGM08}
that the class of inputs that will give a worst-case output size, and hence a worst-case
running time, is one where each relation is a $\sqrt{m} \times \sqrt{m}$ fully bipartite
graph. In this case, the output has $m^{3/2}$ tuples. The load
needed to compute $C_3$ on this input in a single round is $\Omega(M/p^{2/3})$,
and can be achieved by using the HyperCube algorithm~\cite{BKS13} with shares
$p^{1/3}$ for each variable. Now, consider
the instance where relations $R,T$ have a single value at variable $x$, 
which participates in all the
$m$ tuples in $R$ and $T$; $S$ is a matching relation with $m$ tuples. In this case,
the output has $m$ tuples (and so $M$ bits), 
and thus is smaller than the worst-case output. However, as we will
see next, we can show that any one-round algorithm that computes the triangle query
for the above input structure requires $\Omega(M/p^{1/2})$ maximum load.

\subsection{An Optimal Algorithm}

We present here a worst-case optimal one-step algorithm that computes a
conjunctive query $q$. Recall that the HC algorithm achieves an optimal load on data without
skew~\cite{BKS14}. In the presence of skew, we will distinguish different cases, 
and for each case we will apply a different parametrization of the HC algorithm, 
using different shares.

We say that a value $h$ in relation $S_j$ is a {\em heavy hitter} in $S_j$ if the frequency 
of this particular value  in $S_j$, denoted $m_{S_j}(h)$, is at least $m_j/p$, where $m_j$ is the number of tuples 
in the relation. Given an output tuple $t$, we say that $t$ is {\em heavy at variable $x_i$} if
the value $t[x_i]$ is a heavy hitter in at least one of the atoms that include variable $x_i$.

We can now classify each tuple $t$ in the output depending on the positions where $t$ is
heavy. In particular, for any $\bx \subseteq \vars{q}$, let $q^{[\bx]}(I)$ denote the subset of the 
output that includes only the output tuples that are heavy at exactly the variables in $\bx$.
Observe that the case $q^{[\emptyset]}(I)$ denotes the case where the tuples are light at all
variables; we know from an application of Lemma~\ref{lem:hashing} that this case can be handled by the standard HC algorithm.
For each of the remaining $2^k-1$ possible sets $\bx \subseteq \vars{q}$, 
we will run a different variation
of the HC algorithm with different shares, which will allow us to compute $q^{[\bx]}(I)$ with
the appropriate load. Our algorithm will compute all the partial answers in parallel for each
$\bx \subseteq \vars{q}$, and thus requires only a single round. 

The key idea is to apply the HC algorithm by giving a non-trivial share only to the variables
that are not in $\bx$; in other words, every variable in $\bx$ gets a share of $1$.
In particular, we will assign to the remaining variables the shares we would assign
if we would execute the HC algorithm for the residual query $q_\bx$. We will thus choose
the shares by assigning $p_i = p^{e_i}$ for each $x_i \in \bx$ 
and solving the following linear program:
\begin{align}
\text{minimize}   \quad & \lambda \nonumber \\
\text{subject to} \quad
                  &  \sum_{i: x_i \notin \bx} -e_i \geq -1 \nonumber \\
\quad \forall j \text{ s.t. } S_j \in \atoms{q_\bx}:  & \sum_{i: x_i \in \vars{S_j} \setminus \bx} e_i + \lambda \geq \mu_j \nonumber \\
\quad \forall i \text{ s.t. } x_i \notin \bx: & e_i \geq 0, \quad \lambda \geq 0 \label{eq:primal:lp}
\end{align}
For each variable $x_i \in \bx$, we set $e_i = 0$ and thus the share is $p_i =1$.
We next present the analysis of the load for the above algorithm.

\begin{theorem} \label{th:one:round:upper}
Any full conjunctive query $q$ with input relation sizes $\bM$ can be computed 
in the \mpc\ model in a single round using $p$ servers with maximum load
$$ L = \tilde{O} \left( \max_{\bx \subseteq \vars{q} } L^{(q_\bx)}(\bM, p)  \right) $$
\end{theorem}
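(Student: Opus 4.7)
The plan is to decompose the output as $q(I) = \bigcup_{\bx \subseteq \vars{q}} q^{[\bx]}(I)$ and to compute each of the $2^k$ pieces in parallel, within the same single round, using a separately tuned copy of the HC algorithm. Because $k$ is a constant of the query, the overall load is a constant multiple of the largest load among the $2^k$ instances, so it suffices to exhibit a one-round algorithm for $q^{[\bx]}(I)$ whose load is $\tilde{O}(L^{(q_\bx)}(\bM, p))$ for every fixed $\bx$.

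For a fixed $\bx$ I would set $p_i = 1$ for every $x_i \in \bx$ and $p_i = p^{e_i}$ for every $x_i \notin \bx$, where $(e_i,\lambda)$ is an optimal solution of the LP (\ref{eq:primal:lp}); by LP duality, this choice satisfies $p^\lambda = L^{(q_\bx)}(\bM, p)$. I would then filter each relation $S_j$ locally, retaining only the tuples whose value at every position in $\vars{S_j} \cap \bx$ is a heavy hitter of $S_j$ and whose value at every position in $\vars{S_j} \setminus \bx$ has frequency at most $m_j / p$; no other tuple of $S_j$ can contribute to $q^{[\bx]}(I)$. The HC algorithm is then executed on the filtered relations with these shares, hashing only the variables outside $\bx$.

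The core step is to verify the hypothesis of Lemma~\ref{lem:hashing} for each filtered relation $S_j$ with $S_j \in \atoms{q_\bx}$. Fix such $S_j$, a subset $U \subseteq \vars{S_j}$, and a tuple $t$ over $U$. If $U \subseteq \bx$ then $\prod_{i \in U} p_i = 1$ and the bound $m_{S_j}(t) \le m_j / \prod_{i \in U} p_i$ is trivial; otherwise $U$ contains some $i^* \notin \bx$, the filter guarantees $m_{S_j}(t) \le m_j / p$, and the bound follows because $\prod_{i \in U} p_i \le \prod_i p_i = p$. Lemma~\ref{lem:hashing} then yields load $\tilde{O}(p^\lambda) = \tilde{O}(L^{(q_\bx)}(\bM, p))$ from the atoms of $q_\bx$.

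The step I expect to be the main obstacle is the treatment of the remaining atoms $S_j$ with $\vars{S_j} \subseteq \bx$, which contribute no constraint to the LP yet whose tuples must still be delivered to the servers that need them. After filtering, such a relation has at most $p^{|\vars{S_j}|}$ surviving tuples (because each attribute has at most $p$ heavy values in $S_j$), and I would broadcast this small residual data to every server so that the local join phase can enforce the remaining constraints. Showing that this broadcast cost is absorbed into $\max_{\bx'} L^{(q_{\bx'})}(\bM, p)$, by exploiting the fact that some other choice of $\bx'$ in the outer maximum already dominates $p^{|\vars{S_j}|}$ whenever this atom is nontrivially present, is the most delicate part of the argument; a secondary technicality is that identifying heavy hitters and their frequencies must either be folded into the same round via a small sampling histogram or charged to a preliminary constant-round preprocessing phase.
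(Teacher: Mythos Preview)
Your approach is essentially the paper's: the same partition into the pieces $q^{[\bx]}(I)$, the same share assignment ($p_i=1$ on $\bx$, LP-optimal on the rest), the same case analysis that verifies the hypothesis of Lemma~\ref{lem:hashing} on the atoms of $q_\bx$, and the same broadcast of atoms entirely inside $\bx$.

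Two small corrections. First, your local filter on the $\bx$-positions is too aggressive: an output tuple lies in $q^{[\bx]}(I)$ when its value at $x_i\in\bx$ is a heavy hitter in \emph{some} atom containing $x_i$, not necessarily in $S_j$ itself; filtering $S_j$ to values heavy \emph{in $S_j$} can drop tuples that are needed. The paper does not filter the $\bx$-positions at all for atoms in $q_\bx$ (the share there is $1$, so the Lemma~\ref{lem:hashing} bound is automatic), and for atoms with $\vars{S_j}\subseteq\bx$ it just observes that at most $O(p)$ values per coordinate can be heavy across all atoms, giving the $p^{a_j}$ bound. Second, the ``main obstacle'' you anticipate is not handled in the paper by comparing to another $\bx'$ in the outer maximum; the paper simply invokes the standing assumption $m_j\gg p$, so that $p^{a_j}$ is dominated by the load term coming from the atoms of $q_\bx$. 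With those two adjustments your argument coincides with the paper's.
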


\begin{proof}
Let us fix a set of variables $\bx \subseteq \vars{q}$; we will show that the load of the algorithm
that computes $q^{[\bx]}(I)$ is $ \tilde{O} (L^{(q_\bx)}(\bM,p))$. The upper bound then follows
from the fact that we are running in parallel algorithms for all partial answers.

Indeed, let us consider how each relation $S_j$ is distributed using the shares
assigned. We distinguish two cases. If an atom $S_j$ contains variables that
are only in $\bx$, then the whole relation will be broadcast to all the $p$ servers.
However, observe that the part of $S_j$ that contributes to $q^{[\bx]}(I)$ is of size
at most $p^{a_j}$, where $a_j$ is the arity of the relation.

Otherwise, we will show that for every tuple $J$ of values over variables 
$\bv \subseteq \vars{S_j}$, 
we have that the frequency of $J$ is at most $m_j /\prod_{i: x_i \in \bv} p_i$.
Indeed, if $\bv$ contains only variables from $\bx$, then by construction
$\prod_{i: x_i \in \bv} p_i = 1$; we observe then that the frequency is
always at most $m_j$. If $\bv$ contains some variable $x_i \in \bv \setminus \bx$,
then the tuple $J$ contains at position $x_i$ a value that appears at most $m_j/p$
times in relation $S_j$, and since $\prod_{i: x_i \in \bv} p_i \leq p$ the claim holds. 
We can now apply Lemma~\ref{lem:hashing} to 
obtain that for relation $S_j$, the load will be 
$\tilde{O}(M_j/(\prod_{i: x_i \in \vars{S_j} \setminus \bx} p_i) )$. 
Summing over all atoms in the residual query $q_\bx$, and assuming that 
$m_j \gg p$ (and in particular that $p^{a_j}$ is always much smaller than the load), we
obtain that the load will be 
$\tilde{O}( \max_{j: S_j \in \atoms{q_\bx}} M_j/(\prod_{i: x_i \in \vars{S_j} \setminus \bx} p_i) )$,
which by an LP duality argument is equal to $\tilde{O} (L^{(q_\bx)}(\bM,p))$.
\end{proof}

When all relation sizes are equal, that is, $M_1 = M_2 = \dots = M_\ell = M$, the formula 
for the maximum load becomes $\tilde{O}(M/p^{1 / \psi^*(q)})$, where $\psi^*(q)$ is the 
edge quasi-packing number, which we have defined as
$\psi^*(q) = \max_{\bx \subseteq \vars{q}} \max_{\bu \in \pk{q_\bx}} \sum_j u_j$.
We will discuss about the quantity $\psi^*(q)$ in detail in Section~\ref{subsec:discussion}.
We will see next how the above algorithm applies to the triangle query $C_3$. 

\begin{example}
We will describe first how the algorithm works when each relation has size $M$
(and $m$ tuples). There are three different share
allocations, for each choice of heavy variables (all other cases are symmetrical).
\begin{description}
\item[$\bx = \emptyset$]: we consider only tuples with values of frequency 
$\leq m/p$. The HC algorithm will assign a share of $p^{1/3}$ to each variable, and the  maximum load will be $\tilde{O}(M/p^{2/3})$.
\item[$\bx = \{ x \}$]: the tuples have a heavy hitter value at variable $x$,
either in relation $R$ or $T$ or in both. The algorithm will give a share of 1 to $x$, and shares of
$p^{1/2}$ to $y$ and $z$. The load will be $\tilde{O}(M/p^{1/2})$.
\item[$\bx = \{ x,y\}$]: both $x$ and $y$ are heavy. In this case we broadcast 
the relation $R(x,y)$, which will have size at most $p^2$, 
and assign a share of $p$ to $z$. The load will be $\tilde{O}(M/p)$. 
\end{description}

Notice finally that the case where $\bx = \{x,y,z \}$ can be handled by broadcasting
all necessary information. The load of the algorithm is the maximum of the above quantities,
which is $\tilde{O}(M/p^{1/2})$.
When the size vector is $\bM = (M_1, M_2, M_3)$, the load achieved by the algorithm 
becomes $\tilde{O}(L)$, where:
$$ L = \max \left\{ \frac{M_1}{p}, \frac{M_2}{p}, \frac{M_3}{p},
\sqrt{\frac{M_1 M_2}{p}}, \sqrt{\frac{M_2 M_3}{p}}, \sqrt{\frac{M_1 M_3}{p}} \right\}.$$
\end{example}

\subsection{Lower Bounds}

We present here a worst-case lower bound for the load of one-step algorithms
for computing conjunctive queries in the
\mpc\ model, when the information known is the cardinality statistics 
$\bM = (M_1, \dots, M_\ell)$. The lower bound matches the upper bound in the
previous section, hence proving that the one-round algorithm is worst-case optimal.
We give a self-contained proof of the result in~\autoref{sec:appendix}, but many
of the techniques used can be found in previous work~\cite{BKS13, BKS14},
where we proved lower bounds for skew-free data and for input data with known
information about the heavy hitters.

\begin{theorem} 
\label{th:lower:skew} 
Fix cardinality statistics $\bM$ for a full conjunctive query $q$.
Consider any deterministic MPC algorithm that runs in one communication round on
$p$ servers and has maximum load $L$ in bits.  Then, for any $\bx \subseteq \vars{q}$,
there exists a family of (random) instances for which the load $L$ will be:
   $$L \geq \min_j \frac{1}{4 a_j} \cdot L^{(q_\bx)}(\bM, p).$$
\end{theorem}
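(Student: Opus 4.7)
The plan is to fix an arbitrary $\bx \subseteq \vars{q}$ and reduce the problem to proving the lower bound for the residual query $q_\bx$ on matching databases, where one can invoke the techniques from~\cite{BKS13,BKS14}. The intuition is that we can construct a family of instances for $q$ in which the variables in $\bx$ are maximally skewed (effectively frozen to a single value), so that computing $q$ on this family is equivalent to computing $q_\bx$ on an unskewed matching instance; thus any one-round MPC algorithm for $q$ inherits whatever lower bound applies to $q_\bx$ on matching databases.

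Concretely, I would proceed as follows. First, for each variable $x_i \in \bx$, pick a distinguished value $c_i$. For every atom $S_j$ of $q$ with $\vars{S_j} \not\subseteq \bx$, generate a random matching instance over the residual attributes $\vars{S_j}\setminus\bx$ (each value appearing exactly once in each column) of the appropriate size, and then pad every tuple by inserting $c_i$ at position $x_i$ for each $x_i \in \vars{S_j}\cap\bx$. The resulting relation has $m_j$ tuples and total size $M_j$ in bits. For atoms $S_j$ with $\vars{S_j}\subseteq \bx$, simply include the single tuple $(c_i)_{x_i\in\vars{S_j}}$ (any padding with other tuples does not affect $q$ since those tuples fail to match on the frozen coordinates). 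By construction, the output of $q$ on this instance is in one-to-one correspondence with the output of $q_\bx$ on the underlying matching instance for the atoms of $q_\bx$.

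Next, I would apply the standard information-theoretic lower bound for one-round MPC on matching databases to the residual query $q_\bx$ with cardinality vector $\bM$ restricted to the atoms of $q_\bx$. Following the argument in~\cite{BKS14}, for any deterministic one-round algorithm with load $L$, one counts the expected number of output tuples each server can produce as a function of the $L$ bits it receives; using a Friedgut-type inequality one shows that the total number of output tuples produced across all $p$ servers is at most $p\cdot\prod_j (2^{L}\cdot 2)^{u_j}/M_j^{u_j}\cdot(\text{something})$ for any $\bu\in\pk{q_\bx}$, from which, combined with the known output cardinality of a random matching instance, one gets $L \geq \frac{1}{\sum_j u_j}(\log\prod_j M_j^{u_j} - \log p) - O(1)$ per atom; optimizing $\bu\in\pk{q_\bx}$ recovers $L^{(q_\bx)}(\bM,p)$, and a careful accounting of additive constants and the per-tuple bit charge gives the factor $\min_j \frac{1}{4a_j}$.

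The main obstacle I expect is ensuring that the ``inflation'' from a $q_\bx$-instance to a $q$-instance does not accidentally make computation easier: one has to verify that the frozen atoms with $\vars{S_j}\subseteq\bx$, which an algorithm can freely broadcast, do not provide nontrivial information about the matching structure on the other atoms, and that the $M_j$ budget for such atoms is spent only on tuples irrelevant to the output. A related subtlety is the choice of the constant $1/(4a_j)$: it arises from the bit-length needed to encode a tuple of arity $a_j$, together with a probabilistic truncation step in the proof that discards servers receiving too many output-producing tuples; making this bookkeeping tight requires a careful separation between ``heavy'' and ``light'' servers in the spirit of the arguments of~\cite{BKS13, BKS14}, adapted to the residual query.
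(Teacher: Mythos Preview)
Your high-level approach is the same as the paper's: freeze the variables in $\bx$ to a single constant, populate each atom $S_j$ with a random matching on the residual attributes $\vars{S_j}\setminus\bx$ (padding atoms with $\vars{S_j}\subseteq\bx$ by irrelevant tuples), and then apply the entropy/Friedgut machinery from~\cite{BKS13,BKS14} to the residual query $q_\bx$. Your worry about the ``inflation'' step is unfounded: since the frozen atoms carry no information about the matchings on the other atoms, they cannot help any server, and the paper handles them exactly as you suggest.

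However, two of your technical claims are off. First, the bound you write, $L \geq \tfrac{1}{\sum_j u_j}\bigl(\log\prod_j M_j^{u_j} - \log p\bigr) - O(1)$, is an additive/logarithmic inequality, whereas the statement you are trying to prove is multiplicative: $L \geq c\cdot\bigl(\prod_j M_j^{u_j}/p\bigr)^{1/\sum_j u_j}$. The paper never passes to logs; it bounds directly the expected number of output tuples a server can produce via Friedgut's inequality applied to the probabilities $w_j(\ba_j)$, uses an entropy lemma to show $\sum_{\ba_j} w_j(\ba_j) \leq \tfrac{4L}{(a_j-d_j)\log n}$, and compares against the expected output size. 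One ingredient you do not mention is the \emph{extended query} trick: unary atoms $S_i'(x_i)$ are added for each residual variable, with weights $u_i' = 1 - \sum_{j: x_i \in S_j} u_j$ absorbing the slack so that Friedgut applies cleanly.

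Second, your explanation of the constant $\tfrac{1}{4a_j}$ is wrong: there is no ``probabilistic truncation step'' and no heavy/light server dichotomy in this argument. The factor comes purely from the entropy lemma (the $4$) together with the conversion $M_j = a_j m_j \log n$ between bits and tuples; in fact the paper first gets $\tfrac{a_j-d_j}{4a_j}$ and then uses $a_j-d_j \geq 1$ for atoms surviving in $q_\bx$.
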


Since $a_j \geq 1$, \autoref{th:lower:skew} implies that for any query $q$ there exists a 
family of instances such that  any one-round algorithm that computes $q$ must have load 
$\Omega(\max_{\bx \subseteq \vars{q}} L^{(q_\bx)}(\bM,p) )$.

\subsection{Discussion}
\label{subsec:discussion}

We present here the computation of the load of the one-round algorithm for
various classes of queries, and also discuss the edge quasi-packing number $\psi^*(q)$
and its connection with other query-related quantities.
 
Recall that we showed that when all relation sizes are equal to $M$, the load achieved is
of the form $\tilde{O}(M/p^{1/\psi^*(q)})$, where $\psi^*(q)$ is the quantity that maximizes the 
sum of the weights of the edge quasi-packing. The quantity $\psi^*(q)$ is
in general different from both the fractional covering number $\tau^*(q)$, and from the 
fractional edge cover number $\rho^*(q)$. Indeed, for the triangle query $C_3$ we have that
$\rho^*(C_3) =  \tau^*(C_3) = 3/2$, while $\psi^*(C_3) = 2$. 
Here we should remind the reader that $\tau^*$ 
describes the load for one-round algorithms on data without skew, which is 
$O(M/p^{1/\tau^*(q)})$. Also, $\rho^*$ characterizes the maximum possible output of a 
query $q$, which is $M^{\rho^*(q)}$.
We can show the following relation between the three quantities:

\begin{lemma}
For every conjunctive query $q$, $ \psi^*(q) \geq \max \{ \tau^*(q), \rho^*(q)\}$.
\end{lemma}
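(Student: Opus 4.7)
My plan is to prove the two inequalities $\psi^*(q)\ge \tau^*(q)$ and $\psi^*(q)\ge \rho^*(q)$ separately. The first is essentially by definition: taking $\bx=\emptyset$ in the outer maximum of $\psi^*$ yields $q_\bx=q$ and $\pk{q_\bx}=\pk{q}$, so $\psi^*(q)\ge \max_{\bu\in\pk{q}}\sum_j u_j=\tau^*(q)$. This is just a restatement of the inclusion $\pk{q}\subseteq\qpk{q}$ already noted in the excerpt.

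For $\psi^*(q)\ge\rho^*(q)$ the key idea will be to convert an optimal fractional edge cover $\mathbf{w}$ (with $\sum_j w_j=\rho^*(q)$) into a quasi-packing of the same total weight. The choice of $\bx$ I have in mind is the set of \emph{strictly over-covered} variables,
\[
\bx\ :=\ \bigl\{\,x_i\in\vars{q}\ :\ \textstyle\sum_{j\,:\,x_i\in\vars{S_j}} w_j \,>\, 1\,\bigr\},
\]
and the quasi-packing will simply be $u_j:=w_j$ for every atom $S_j$.

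Two things then need to be checked. First, for each remaining variable $x_i\notin\bx$ the packing constraint $\sum_{j:\,x_i\in\vars{S_j}} u_j\le 1$ holds immediately by the very definition of $\bx$. Second, the definition of a quasi-packing requires $u_j=0$ whenever $\vars{S_j}\subseteq\bx$; I would establish this from optimality of $\mathbf{w}$, by observing that if some such $S_j$ had $w_j>0$ then every cover constraint indexed by $x_i\in\vars{S_j}$ would already be \emph{strict} (because $x_i\in\bx$), so $w_j$ could be decreased by a small positive amount while preserving feasibility, contradicting optimality of $\mathbf{w}$. Granted these two properties, $\bu$ is a valid edge packing of $q_\bx$ that vanishes on atoms entirely inside $\bx$, and its total weight is $\sum_j u_j=\sum_j w_j=\rho^*(q)$, which gives $\psi^*(q)\ge\rho^*(q)$.

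The main obstacle, if one can call it that, is purely conceptual: guessing that the correct set $\bx$ to remove is exactly the over-covered variables. Once that choice is made, both required properties fall out of LP optimality in a couple of lines, with no case analysis or technical estimates.
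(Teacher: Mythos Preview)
Your proposal is correct and follows essentially the same approach as the paper: both proofs choose $\bx$ to be the set of strictly over-covered variables of an optimal fractional edge cover, and both use LP optimality (decreasing a weight whose atom has only slack constraints) to verify that the cover is a valid quasi-packing for $q_\bx$. Your treatment is actually slightly cleaner on one point: you explicitly handle the condition that atoms with $\vars{S_j}\subseteq\bx$ must have weight $0$, whereas the paper argues that every atom has a tightly covered variable and hence survives in $q_\bx$---an argument that tacitly assumes $w_j>0$, but the conclusion is unaffected since zero-weight atoms contribute nothing anyway.
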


\begin{proof}
Since any edge packing is also an edge quasi-packing, it is straightforward to 
see that $\tau^*(q) \leq \psi^*(q)$ for every conjunctive query $q$.

To show that $\rho^*(q) \leq \psi^*(q)$, consider the optimal (minimum) edge cover
$\bu$; we will show that this is also an edge quasi-packing. First, observe that for
every atom $S_j$ in $q$, there must exist at least one variable $x \in \vars{S_j}$ such that
$\sum_{j: x \in \vars{S_j}} u_j =1 $. Indeed, suppose that for every variable in $S_j$ we have
that the sum of the weights strictly exceeds 1; then, we can obtain a better (smaller) edge cover
by slightly decreasing $u_j$, which is a contradiction. 

Now, let $\bx$ be the set of variables such that their cover in $\bu$ strictly exceeds 1, and
consider the residual query $q_\bx$. By our previous claim, every relation in $q$ is still
present in $q_\bx$, since every relation includes a variable with cover exactly one. Further,
for every variable $x \in \vars{q_\bx}$ we have $\sum_{j: x \in \vars{S_j}} u_j = 1$, and hence 
$\bu \in \pk{q_\bx}$.
\end{proof}

Another useful observation regarding the edge quasi-packing is the following lemma, which connects the edge quasi-packing of a query to the edge quasi-packing of any residual query.

\begin{lemma}
\label{lem:qpk:recursion}
For any conjunctive query $q$, $\psi^*(q) = \max_{\bx \subseteq \vars{q}}(\psi^*(q_\bx))$.
\end{lemma}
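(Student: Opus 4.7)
The plan is to show the two inequalities separately. One direction is immediate: taking $\bx = \emptyset$ in the right-hand side gives $q_\emptyset = q$, so $\max_{\bx \subseteq \vars{q}}\psi^*(q_\bx) \geq \psi^*(q)$. For the reverse direction, I would prove that $\psi^*(q_\bx) \leq \psi^*(q)$ for every $\bx \subseteq \vars{q}$, from which the full inequality follows by taking the max.

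The key identity underlying the reverse direction is that residuation is closed under union of removed variables, namely $(q_\bx)_\by = q_{\bx \cup \by}$ whenever $\by \subseteq \vars{q_\bx}$. I would verify this by checking two things: first, that the surviving variables are the same (both sides have variable set $\vars{q} \setminus (\bx \cup \by)$); and second, that the surviving atoms coincide. For the latter, an atom $S_j$ is deleted in the two-step process iff either all its variables lie in $\bx$ (killed in step one) or all variables remaining after step one lie in $\by$ (killed in step two); both conditions combine to say exactly that $\vars{S_j} \subseteq \bx \cup \by$, which is precisely the condition under which $S_j$ is deleted from $q_{\bx \cup \by}$. The arities of the retained atoms match since in both processes each atom has the same set of removed variables.

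Once the identity $(q_\bx)_\by = q_{\bx \cup \by}$ is in hand, fix $\bx \subseteq \vars{q}$ and expand
\[
\psi^*(q_\bx) \;=\; \max_{\by \subseteq \vars{q_\bx}} \;\max_{\bv \in \pk{(q_\bx)_\by}} \sum_j v_j \;=\; \max_{\by \subseteq \vars{q_\bx}} \;\max_{\bv \in \pk{q_{\bx \cup \by}}} \sum_j v_j.
\]
Any such $\bv$ is a packing of the residual $q_{\bx \cup \by}$ with $\bx \cup \by \subseteq \vars{q}$, so it is a quasi-packing of $q$ by definition; extending $\bv$ by zero on atoms removed during residuation leaves the sum $\sum_j v_j$ unchanged, and this sum is bounded above by $\psi^*(q)$. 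Taking the max over $\by$ and $\bv$ yields $\psi^*(q_\bx) \leq \psi^*(q)$.

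I do not expect any real obstacle here; the only subtle point is the bookkeeping around zero-arity atoms when verifying $(q_\bx)_\by = q_{\bx \cup \by}$, since the definition of residual query in the excerpt explicitly drops atoms that lose all their variables. Making the "extend the packing by zero on deleted atoms" step explicit, exactly as in the definition of quasi-packing, is what makes the final inequality transparent.
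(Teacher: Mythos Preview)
Your proposal is correct and follows essentially the same approach as the paper: both arguments hinge on the identity $(q_\bx)_\by = q_{\bx \cup \by}$, from which $\psi^*(q_\bx) = \max_\by \tau^*\bigl((q_\bx)_\by\bigr) = \max_\by \tau^*(q_{\bx \cup \by}) \leq \psi^*(q)$ follows immediately. The only cosmetic difference is that for the reverse inequality the paper phrases it as $\psi^*(q) = \tau^*(q_\by) \leq \psi^*(q_\bx)$ for every $\bx \subseteq \by$ (where $\by$ is an optimizer for $q$), which is a slightly stronger intermediate claim than your choice of $\bx = \emptyset$, but this extra generality is not needed for the lemma as stated.
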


\begin{proof}
Let $\by \subseteq \vars{q_\bx}$ be the set of variables that produce the optimal edge quasi-packing for $q_\bx$, where $\bx \subseteq \vars{q}$. Then, $\psi^*(q_\bx) = \tau^*((q_\bx)_{\by}) = \tau^*(q_{\bx \cup \by}) \leq \psi^*(q)$. To prove the equality, let $q_{\by}$ be the residual query that produces the optimal edge quasi-packing for $q$. Then, for any $\bx \subseteq \by$, we have $\psi^*(q) = \tau^*(q_{\by}) = \tau^*((q_\bx)_{\by \setminus \bx}) \leq \psi^*(q_\bx)$.
\end{proof}

Observe that, if $\psi^*(q) > \tau^*(q)$, then we have to remove at least one variable from $q$ to obtain an optimal quasi-packing. In this case, we can also write $\psi^*(q) = \max_{x \in \vars{q}}(\psi^*(q_x))$. In other words, to compute the optimal quasi-packing we have to find the variable that will produce the residual query with the maximum quasi-packing, which is a recursive process.

In~\autoref{tab:complexity} we have computed the quantities $\tau^*, \rho^*, \psi^*$ for
several interesting classes of conjunctive queries: the star query $T_k$, the spiked star query $SP_k$,
the cycle query $C_k$, the line query $L_k$, the Loomis-Whitney join $LW_k$, the generalized
semi-join query $W_k$ and the clique (or full) query $K_k$. We next provide the detailed computation of these queries. \\

\begin{table}
  \centering
 { \tabulinesep=0.8mm
\begin{tabu}{|l|c|c|c|} \hline
  Conjunctive Query $q$ & $\tau^*(q)$    & $\rho^*(q)$ & $\psi^*(q)$ \\
    \hline

$T_k= \bigwedge_{j=1}^k S_j(z,x_j) $  & $1$ & $k$  & $k$  \\ \hline
$SP_k = \bigwedge_{i=1}^k R_i(z,x_i), S_i(x_i, y_i)$ & $k$ & $k+1$  & $k+1$ \\ \hline
$ K_k = \bigwedge_{1\le i<j \le k} S_{i,j}(x_i, x_j)$ & $k/2$ & $k/2$ & $k-1$ \\ \hline
$W_k = R(x_1, \dots, x_k) \bigwedge_{j=1}^k S_j(x_j) $ & $k$ & $1$ & $k$ \\ \hline
$L_k=  \bigwedge_{j=1}^k S_j(x_{j-1},x_j)$ & $\lceil k/2 \rceil$ & $\lceil (k+1)/2 \rceil$  
& $\lceil 2k/3 \rceil$ \\ \hline
$L_k^*=  R(x_0) \bigwedge_{j=1}^k S_j(x_{j-1},x_j)$ & $\lceil k/2 \rceil$ & $\lceil (k+1)/2 \rceil$  
& $\lceil (2k+1)/3 \rceil$ \\ \hline
$L_k^\dag=  R(x_0) \bigwedge_{j=1}^k S_j(x_{j-1},x_j) \bigwedge S(x_k)$ & $\lceil (k+1)/2 \rceil$ & $\lceil (k+1)/2 \rceil$  
& $\lceil (2k+2)/3 \rceil$ \\ \hline
$C_k = \bigwedge_{j=1}^{k} S_j(x_j,x_{(j \bmod k)+1})$ & $k/2$ & $k/2$ 
& $\lceil 2(k-1)/3 \rceil $  \\ \hline
$LW_k = \bigwedge_{I \subseteq [k], |I|=k-1} S_{I}(\bar x_{I})$ & $k/(k-1)$ &  $k/(k-1)$  & 2  \\ \hline
\end{tabu} }

  \caption{Computing the optimal edge packing $\tau^*$, edge cover $\rho^*$ and 
  edge quasi-packing $\psi^*$ for several classes of conjunctive queries.}
  \label{tab:complexity}
\end{table}

\noindent {\bf Star Queries.} Consider the star query 
$$T_k = S_1(z, x_1), S_2(z,x_2), \dots, S_k(z,x_k)$$
which generalizes the simple join $R(x,y), S(y,z)$ between two relations. It is easy to observe that the 
optimal edge packing does not exceed 1, since every relation includes the 
variable $z$. To obtain the maximum edge quasi-packing, we consider the residual
query $q_z = S_1(x_1), S_2(x_2), \dots, S_k(x_k)$ that removes the variable $z$, which is common in all 
atoms. Then, we can pack each relation $S_i$ with weight one, thus achieving a sum of $k$. As a 
consequence of this example, we obtain that the quantities $\tau^*$ and
$\psi^*$ generally are not within a constant factor. \\

\noindent {\bf Spiked Star Queries.} The optimal edge packing for $SP_k$ assigns a weight of $1$ to
each atom $S_i$; hence, $\tau^*(SP_k) = k$. To obtain the optimal quasi-packing, we consider the residual query $(SP_k)_{\{x_1, \dots, x_k \}} = R_1(z), \dots, R_k(z), S_1(y_1), \dots, S_k(y_k)$. This residual query admits an edge quasi-packing of weight $k+1$, by assigning a weight of 1 to any $R_i$ and a weight of 1 to every $S_i$. \\

\noindent {\bf Clique Queries.} 
Consider the clique query $K_k$, which includes all possible binary relations among the
$k$ variables $x_1, \dots, x_k$. The optimal edge packing is achieved by assigning a 
weight of $1/(k-1)$ to each relation, and thus $\tau^*(K_k) = \binom{k}{2} \frac{1}{k-1} = k/2$.
The corresponding share allocation for the HC algorithm assigns an equal share of $p^{1/k}$ to each variable. For the edge quasi-packing, consider the residual query
$(K_k)_{x_1}$, and notice that it includes $(k-1)$ unary relations, one for each of variable
$x_2, \dots, x_k$. Hence, we can obtain an edge packing of total weight $k-1$ by assigning a weight of $1$ to each. This is optimal, since by symmetry we have that for any $x_i$, $\psi^*(K_k) = \psi^*((K_k)_{x_i})$, and the quasi packing of the residual query $(K_k)_{x_i}$ can be at most $k-1$, since it has $k-1$ variables. \\

\noindent {\bf Generalized Semi-Join Queries.} \
The optimal edge packing assigns a weight of $1$ to each relation $S_j$, thus achieving a total weight of $k$. Since the edge quasi-packing cannot exceed the number of variables, $k$ is also the optimal value for the edge quasi-packing for $W_k$.  \\

\noindent {\bf Line and Cycle Queries.} To compute the edge quasi-packing for the cycle and line query, we need to consider two more CQ classes, which we define next:
\begin{align*}
L_k &=  S_1(x_0, x_1), S_2(x_1, x_2), \dots, S_k(x_{k-1}, x_k) \\
L_k^{*} & = R(x_0), S_1(x_0, x_1), S_2(x_1, x_2), \dots, S_k(x_{k-1}, x_k) \\
L_k^{\dag}  & = R(x_0),  S_1(x_0, x_1), S_2(x_1, x_2), \dots, S_k(x_{k-1}, x_k), T(x_k)
\end{align*}
The query $L_k^*$ adds to the line query a unary relation to the left, and the query $L_k^{\dag}$ adds a unary relation to the left and right of the query.

\begin{lemma}
\label{lem:line:lr}
$\psi^*(L_k^{\dag}) = \lceil (2k+2)/3\rceil$. 
\end{lemma}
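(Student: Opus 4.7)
The plan is to reduce computing $\psi^*(L_k^\dag)$ to a structural analysis of a path. For any $\bx \subseteq \vars{L_k^\dag}$, let $V_\bx = \{x_0, \ldots, x_k\} \setminus \bx$ be the retained variables. The residual $(L_k^\dag)_\bx$ is obtained by deleting removed variables and collapsing every binary atom $S_j(x_{j-1}, x_j)$ that has exactly one endpoint removed into a unary ``loop'' on the surviving endpoint (atoms whose arity drops to $0$ are deleted). The retained variables partition into \emph{segments}, i.e., maximal runs $\{x_a, x_{a+1}, \ldots, x_b\}$ of consecutive retained variables. Each segment is a sub-path equipped with a loop at $x_a$ (either $R$ when $a=0$, or the broken $S_a$ when $a>0$) and a loop at $x_b$ (either $T$ when $b=k$, or the broken $S_{b+1}$ when $b<k$); when the segment has size $1$ both loops sit at its single vertex. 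The packing LP decouples across segments, so $\tau^*((L_k^\dag)_\bx)$ equals the sum of the maximum fractional matchings of the individual segments.

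Next we show that the maximum fractional matching of a path on $n \ge 1$ vertices carrying at least one loop at each endpoint equals $\lceil (n+1)/2 \rceil$. The primal construction puts weight $1$ on both end-loops and on every other interior edge; the matching dual vertex cover places weight $1$ on the two endpoints together with every other interior vertex, certifying optimality by LP duality.

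For the lower bound, we take $\bx = \{x_i : 0 \le i \le k,\ i \equiv 2 \pmod 3\}$. A short case analysis on $k \bmod 3$ shows that the retained vertices split into segments of size $2$, each contributing $2$ to the packing, possibly together with a trailing segment of size $1$ contributing $1$. The total is exactly $\lceil (2k+2)/3 \rceil$.

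For the upper bound, fix any $\bx$ producing $s \ge 1$ non-empty segments of sizes $n_1, \ldots, n_s$, and let $N = \sum_i n_i$. The elementary inequality $\lceil (n+1)/2 \rceil \le \tfrac{2(n+1)}{3}$ holds for every $n \ge 1$, with equality precisely at $n = 2$; summing over segments gives a packing at most $\tfrac{2(N+s)}{3}$. Any two consecutive segments must be separated by at least one removed variable, so $N + s \le k + 2$, and the packing is bounded by $\tfrac{2(k+2)}{3}$. Since packings are integers, $\tau^*((L_k^\dag)_\bx) \le \lfloor \tfrac{2(k+2)}{3} \rfloor = \lceil (2k+2)/3 \rceil$; maximizing over $\bx$ yields $\psi^*(L_k^\dag) \le \lceil (2k+2)/3 \rceil$. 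The delicate point, and the main obstacle to keeping the argument tight, is that the per-segment inequality is saturated exactly at $n_i = 2$, which is precisely the segment size produced by the explicit construction, so the upper and lower bounds meet.
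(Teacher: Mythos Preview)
Your proof is correct, but it takes a genuinely different route from the paper's. The paper argues by induction on $k$: it observes that removing an interior variable never hurts, so $\psi^*(L_k^\dag)=\max_{1\le i\le k-1}\{\psi^*(L_{i-1}^\dag)+\psi^*(L_{k-i-1}^\dag)\}$, applies the inductive hypothesis to both summands, and then does a case analysis on $i\bmod 3$ to evaluate the maximum of $\lceil 2i/3\rceil+\lceil 2(k-i)/3\rceil$. Your argument is non-inductive: you decompose an arbitrary residual query into path segments with loops at both ends, compute each segment's fractional packing explicitly as $\lceil (n+1)/2\rceil$ via LP duality, and then bound the total with the per-segment inequality $\lceil (n+1)/2\rceil\le \tfrac{2}{3}(n+1)$ together with the pigeonhole bound $N+s\le k+2$ and integrality. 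The paper's recursion is shorter and dovetails with its subsequent proofs for $L_k^*$, $L_k$, and $C_k$ (each reduces to an instance of $L^\dag$); your approach gives more structural information, pinpointing directly that the extremal $\bx$ is the one producing all size-$2$ segments and explaining \emph{why} that choice is optimal (it is the unique segment size that saturates $\lceil (n+1)/2\rceil\le \tfrac{2}{3}(n+1)$). Two cosmetic points worth tightening: your primal description ``weight $1$ on both end-loops and on every other interior edge'' does not literally work for $n=1$ (and needs the right parity choice for odd $n$), though the dual you give already certifies the value $\lceil (n+1)/2\rceil$ in all cases; and the step ``since packings are integers'' should be phrased as ``since each segment's value $\lceil (n_i+1)/2\rceil$ is an integer, so is their sum,'' since fractional packings are not integral in general.
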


\begin{proof}
We will prove this using induction. For the base of the induction, notice that $\psi^*(L_0^{\dag}) = 1 $  and $\psi^*(L_1^{\dag}) = 2 $. For $k  \geq 2$, it is easy to see that removing one variable that is not $x_0$ or $x_k$ can never decrease the packing, since we remove one constraint without removing any relation. Thus, 
\begin{align*}
\psi^*(L_k^{\dag}) & = \max_{i=1}^{k-1} \psi^*({(L_k^{\dag})_{x_i}}) \\
& = \max_{i=1}^{k-1} \{\psi^{*}(L_{i-1}^\dag) + \psi^{*}(L_{k-i-1}^\dag) \} \\
& = \max_{i=1}^{k-1} \{\lceil (2(i-1)+2)/3\rceil + \lceil (2(k-i-1)+2)/3\rceil \} \\
& = \max_{i=1}^{k-1} \{\lceil 2i/3\rceil + \lceil 2(k-i)/3\rceil \} 
\end{align*}
To compute the max, we consider three different cases, depending on the modulo 3 of $i$. If $i=3m$, then the quantity becomes $2m+\lceil 2k/3 -2m \rceil  = \lceil 2k/3 \rceil$. If $i = 3m+1$, then it becomes $\lceil 2m+2/3 \rceil + \lceil 2k/3 -2m -2/3\rceil  = \lceil 2k/3 + 1/3 \rceil $. Lastly, for $i = 3m+2$ it becomes $\lceil 2m+4/3 \rceil + \lceil 2k/3 -2m -4/3\rceil = \lceil (2k+2)/3 \rceil$. Thus, the maximum is achieved when we choose $x_2, x_5, x_8, \dots$ and it is $\lceil (2k+2)/3 \rceil$.
\end{proof}

As a corollary of \autoref{lem:line:lr}, we can compute the optimal edge quasi-packing of the cycle query
$$C_k = S_1(x_1, x_2), S_2(x_2, x_3), \dots, S_k(x_k, x_1).$$ 
Observe that by removing a single variable, we obtain the residual query $L_{k-2}^\dag$, which will have at least the same value as $C_k$, since we have removed a constraint (variable) but no relations. Hence, $\psi^*(C_k) = \psi^*(L_{k-2}^\dag) = \lceil (2(k-2)+2)/3\rceil = \lceil 2(k-1)/3\rceil$.

\begin{lemma}
\label{lem:line:l}
$\psi^*(L_k^{*}) = \lceil (2k+1)/3\rceil$. 
\end{lemma}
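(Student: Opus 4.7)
The plan is to mirror the proof of \autoref{lem:line:lr} almost verbatim, doing induction on $k$ while using \autoref{lem:qpk:recursion} together with the already-established formula for $\psi^*(L_k^\dag)$. For the base cases, I would check directly that $\psi^*(L_0^*) = \psi^*(R(x_0)) = 1 = \lceil 1/3\rceil$ and $\psi^*(L_1^*) = \psi^*(R(x_0), S_1(x_0,x_1)) = 1 = \lceil 3/3\rceil$. For the inductive step with $k \geq 2$, I would argue as before that it suffices to consider residual queries obtained by deleting a single variable, since no atom becomes empty when deleting an internal variable, and any larger set of deletions can be obtained by iterating.

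Next I would enumerate the possible single-variable deletions. Removing $x_0$ eliminates $R$ and turns $S_1(x_0,x_1)$ into a unary $S_1(x_1)$, producing (after renaming) $L_{k-1}^*$. Removing $x_k$ turns $S_k(x_{k-1},x_k)$ into $S_k(x_{k-1})$ and produces $L_{k-1}^\dag$. Removing an internal variable $x_i$ with $1 \leq i \leq k-1$ splits the query into two variable-disjoint pieces: a left piece $R(x_0), S_1(x_0,x_1), \ldots, S_{i-1}(x_{i-2},x_{i-1}), S_i(x_{i-1})$ which is isomorphic to $L_{i-1}^\dag$, and a right piece $S_{i+1}(x_{i+1}), S_{i+2}(x_{i+1},x_{i+2}), \ldots, S_k(x_{k-1},x_k)$ which is isomorphic to $L_{k-i-1}^*$. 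Because the pieces share no variables, their edge quasi-packings add, so $\psi^*((L_k^*)_{x_i}) = \psi^*(L_{i-1}^\dag) + \psi^*(L_{k-i-1}^*)$.

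Now I would plug in \autoref{lem:line:lr} and the induction hypothesis to obtain $\psi^*((L_k^*)_{x_i}) = \lceil 2i/3\rceil + \lceil (2(k-i)-1)/3\rceil$ for $1 \leq i \leq k-1$, and handle $i=0, i=k$ separately as $\lceil (2k-1)/3\rceil$ and $\lceil 2k/3\rceil$. Splitting on $i \bmod 3$ exactly as in the previous lemma gives
\[
\psi^*((L_k^*)_{x_i}) =
\begin{cases}
\lceil (2k-1)/3\rceil, & i \equiv 0 \pmod 3,\\
\lceil 2k/3\rceil, & i \equiv 1 \pmod 3,\\
\lceil (2k+1)/3\rceil, & i \equiv 2 \pmod 3.
\end{cases}
\]
Since both corner cases $i=0$ and $i=k$ are dominated by $\lceil (2k+1)/3\rceil$, the maximum is achieved at any internal $i \equiv 2 \pmod 3$ (e.g.\ $x_2$), giving $\psi^*(L_k^*) = \lceil (2k+1)/3\rceil$.

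The main obstacle is really bookkeeping rather than a conceptual difficulty: one has to correctly identify the two fragments produced by deleting an internal vertex (noting that the unary $R(x_0)$ survives on the left side while $S_i$ becomes a right-end unary, so the left piece is $L_{i-1}^\dag$ and \emph{not} $L_{i-1}^*$), and one has to be careful with the $\lceil\cdot\rceil$ arithmetic. Once the decomposition is spelled out, the case analysis is a straightforward rerun of the computation in \autoref{lem:line:lr}.
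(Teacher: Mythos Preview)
Your proposal is correct and follows essentially the same induction as the paper's own proof: reduce to single-variable deletions via \autoref{lem:qpk:recursion}, identify the two fragments $L_{i-1}^\dag$ and $L_{k-i-1}^*$, plug in \autoref{lem:line:lr} and the induction hypothesis, and do the $i \bmod 3$ case split. The only cosmetic difference is that the paper argues up front that deleting $x_0$ (which kills the atom $R$) can be ignored and so restricts to $i \in \{1,\dots,k\}$, whereas you include $i=0$ explicitly as a dominated corner case; be slightly careful that for $k=2$ there is no internal $i\equiv 2\pmod 3$, but the value $\lceil 5/3\rceil=2$ is still attained (e.g.\ at $i=1$ or $i=k$).
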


\begin{proof}
We prove this also by induction. For the base case, we have $\psi^*(L_0^{*}) = 1 $  and $\psi^*(L_1^{*}) = 1 $. For $k \geq 2$, removing any variable that is not $x_0$ will not decrease the optimal edge quasi-packing, hence:
\begin{align*}
\psi^*(L_k^{*}) & = \max_{i=1}^{k} \psi^*({(L_k^{*})_{x_i}}) \\
& = \max_{i=1}^{k} \{\psi^{*}(L_{i-1}^\dag) + \psi^{*}(L_{k-i-1}^*) \} \\
& = \max_{i=1}^{k} \{\lceil 2i/3\rceil + \lceil (2(k-i)-1)/3\rceil \} 
\end{align*}
Using the same reasoning as the proof of~\autoref{lem:line:lr}, we can see that the quantity is maximized for $i = 3m+2$, and then $\psi^*(L_k^{*}) = \lceil (2k+1)/3 \rceil$.
\end{proof}

We can finally compute the optimal edge quasi-packing for $L_k$. We will show that $\psi^*(L_k) = \lceil 2k/3 \rceil$ using induction. The prove is similar to the proof of~\autoref{lem:line:lr}. For the base of the induction, notice that $\psi^*(L_1) = 1$. For any $k \geq 2$, we have that $\psi^*(L_k) =  \max_{i=1}^{k-1} \{\psi^*(L_{i-1}^*) + \psi^*(L_{k-i+1}^*)\} = \max_{i=1}^{k} \{\lceil (2i-1)/3\rceil + \lceil (2(k-i)-1)/3\rceil \} $. The latter quantity is maximized at every $i = 3m+1$, in which case $\psi^*(L_k) = \lceil 2k/3 \rceil$. \\

\noindent {\bf Loomis-Whitney Joins.} To optimal edge packing for LW joins assigns a weight of $1/(k-1)$ to each of the $k$ relations, which implies a total weight of $k/(k-1)$. To compute the optimal edge quasi-packing, observe that by removing any one variable (w.k.o.g. this can be $x_k$), the residual query $(LW_k)_{x_k}$ is equivalent to $LW_{k-1} \bigwedge S'(x_1, \dots, x_{k-1})$. Since $S'$ contains all variables, it is easy to see that $\psi^*((LW_{k})_{x_k}) = \psi^*(LW_{k-1})$. Hence, $\psi^*(LW_k) = \max \{\tau^*(LW_k), \psi^*(LW_{k-1})\}$, and by repeating the argument we have $\psi^*(LW_k) = \max_{k=2}^k \{k/(k-1)\} = 2$.

\section{Multi-round Algorithms}
\label{sec:multi-step}

In this section, we present algorithms for multi-round computation of several conjunctive queries in the case where the relation sizes are all equal. We also prove a lower bound that
proves that their optimality (up to a polylogarithmic factor).

\subsection{Multi-round Lower Bound}

We prove here a general lower bound for any algorithm that computes conjunctive
queries using a constant number of rounds. Observe that the lower bound is expressed in
terms of number of tuples (and not bits); our upper bounds will be expressed in terms of bits,
and thus will be a $\log(n)$ factor away from the lower bound, where $n$ is the domain size.

\begin{theorem}
Let $q$ be a conjunctive query. Then, there exists a family of instances where relations
have the same size $M$ in bits (and $m$ in tuples) such that every
algorithm that computes $q$ with $p$ servers using a constant number of rounds 
requires load  $\Omega(m/p^{1/\rho^*(q)})$.
\end{theorem}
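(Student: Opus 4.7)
The plan is to combine the tightness of the AGM output-size bound with a per-server output-counting argument. First, by the AGM tightness result of Atserias, Grohe and Marx~\cite{DBLP:conf/focs/AtseriasGM08}, for every full conjunctive query $q$ there is a family of instances in which each relation $S_j$ has exactly $m$ tuples and the output satisfies $|q(I)| = \Omega(m^{\rho^*(q)})$; the witness is a product database obtained by projecting a common Cartesian product onto each set of attributes. In bits, each relation still has size $M = \Theta(m \log n)$, where $n$ is the domain size. This will be my hard input family.

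Second, fix any deterministic MPC algorithm that computes $q$ in $r = O(1)$ rounds with per-round per-server load $L$ bits. Each server $s$ receives at most $rL$ bits over the entire execution, which encodes at most $O(rL/\log n)$ input tuples of any single relation. The key structural claim is that, for each server $s$, the set of output tuples $s$ can produce is contained in $q(I_s)$ for some ``effective input'' $I_s \subseteq I$ in which each relation contains at most $O(rL/\log n)$ tuples: inducting on rounds, every message received by $s$ is a deterministic function of a bounded set of input tuples, and correctness of the algorithm on every input consistent with $s$'s transcript forces all of $s$'s outputs to lie in $q(I_s)$. Applying the AGM bound~\cite{DBLP:conf/focs/AtseriasGM08} locally to $I_s$ then gives that each server produces at most $O((rL/\log n)^{\rho^*(q)})$ output tuples.

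Third, summing over the $p$ servers and comparing with the $\Omega(m^{\rho^*(q)})$ output tuples of the hard instance yields
\[ p \cdot (rL/\log n)^{\rho^*(q)} \;\geq\; \Omega(m^{\rho^*(q)}), \]
and for $r = O(1)$ this rearranges to $L \geq \Omega(m/p^{1/\rho^*(q)})$ in tuple-equivalent units, equivalently $\Omega((m \log n)/p^{1/\rho^*(q)})$ bits. This matches the bound in the theorem, up to exactly the $\log n$ gap between tuples and bits acknowledged in the paragraph preceding the statement.

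The main obstacle is making the ``effective input'' reduction rigorous for genuinely multi-round protocols, where a message in round $i>1$ may be an arbitrary function of everything a server received in rounds $<i$, not of raw input tuples. The standard way around this is an information-theoretic encoding argument in the spirit of the lower bounds in~\cite{BKS13,BKS14}: bound the entropy $H(I \mid \text{transcript of }s)$ using the total received load, and observe that any output bit $s$ produces must be a function of its transcript that is correct on every input consistent with that transcript, which restricts its outputs to $q$ of a small effective sub-database. Randomization is then removed by Yao's minimax principle applied against the uniform distribution over the AGM-tight hard instances.
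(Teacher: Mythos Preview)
Your high-level architecture---apply the AGM bound locally at each server, sum over servers, compare to the global output size---is exactly the paper's strategy. The gap is in your choice of hard instance and the step ``$rL$ bits encode at most $O(rL/\log n)$ input tuples.''

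The deterministic product database (each $S_j$ equal to the full Cartesian product $\times_{i:\,x_i\in\vars{S_j}}[n_i]$) has essentially no entropy once $m$ and the domain sizes are fixed. Against a \emph{single fixed} instance, an algorithm can hard-code it: a server may receive one bit and thereby ``know'' all $m$ tuples of every relation. So the claim that $rL$ bits determine only $O(rL/\log n)$ tuples is simply false for this instance. You recognize the difficulty and propose the entropy bound $H(I\mid\text{transcript})$, but that argument only bites when $H(I)$ is large; for the product database it is $O(\log n)$, not $\Theta(m)$. Your final sentence about Yao over ``the uniform distribution over the AGM-tight hard instances'' does not rescue this: that family is ill-specified, and you have not argued it has $\Theta(m)$ bits of entropy \emph{per relation} while simultaneously maintaining output size $\Theta(m^{\rho^*})$ on every instance in the support.

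The paper's fix is to start from the product database but then include each tuple independently with probability $1/2$. This gives each relation $\Theta(m_j)$ bits of entropy, so a counting argument (at most $2^{L'}$ transcripts, each pinning down at most $2^{m_j-(1+\delta)L'}$ relations) shows that with high probability a server with transcript of length $L'=rL$ bits knows at most $O(L')$ tuples of each $S_j$. A separate Chernoff bound shows the random instance still has $\Theta(m^{\rho^*})$ output tuples with high probability. Note that the paper's ``known tuples'' bound is in \emph{bits}, not tuples-times-$\log n$; this is why the final statement is $\Omega(m/p^{1/\rho^*})$ in tuples rather than the stronger bit bound you wrote. Once you replace your hard instance with this random sub-sample, your outline becomes the paper's proof; the multi-round ``effective input'' induction you were worried about is then unnecessary, since the entropy/counting bound depends only on the total transcript length $rL$ and not on how the rounds are structured.
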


\begin{proof} 
In order to prove the lower bound, we will use a family of instances that give the
maximum possible output when every input relation has at most $m$ tuples, 
which is $m^{\rho^*(q)}$ (see~\cite{DBLP:conf/focs/AtseriasGM08}).
We also know how we can construct such a worst-case instance: for each variable
$x_i$ we assign an integer $n_i$ (which corresponds to the domain size of the variable), and
we define each relation as the cartesian product of the domains of the variables it
includes: $\times_{i: x_i \in \vars{S_j}} [n_i]$. The output size then will be 
$\prod_{i} n_i = m^{\rho^*(q)}$ (using a LP duality argument).

We now define the following random instance $I$ as input for the query $q$: 
for each relation $S_j$, we choose each tuple from the full cartesian product of the 
domains independently at random with probability $1/2$. It is straightforward to see
that the expected size of the output is $E[|q(I)|] = (1/2)^\beta \prod_i n_i$, where $\beta$
is the maximum number of relations where any variable occurs (and thus a constant
depending on the query). Using Chernoff's bound we can claim an even stronger result:
the output size will be $\Theta(m^{\rho^*(q)})$ with high probability (the failure probability is exponentially small in $m$).

Now, assume that algorithm $\mA$ computes $q$ with load $L$ (in bits) in $r$ rounds. 
Then, each server receives at most $L' = r\cdot L$ bits. Fix some server and let $\msg$ be 
the whole sequence of bits received by this server during the computation; hence, 
$|\msg| \leq L'$. We will next compute how many
tuples from $S_j$ are known by the server, denoted $K_\msg(S_j)$. W.l.o.g. we can
assume that all $L'$ bits of $\msg$ contain information from relation $S_j$.

We will show that the probability of the event $K_\msg(S_j) > (1+\delta) L'$ is exponentially
small on $\delta$. Let $m_j = \prod_{i: x_i \in \vars{S_j}} n_i \leq m$. 
Observe first that the total number of message configurations of size $L'$ is at most
$2^{L'}$. Also, since the size of the full cartesian product is $m_j$,  $\msg$ can
encode at most $2^{m_j-(1+\delta) L'}$ relations $S_j$ (if $m_j < (1+\delta) L'$, then trivially
the probability of the event is zero, and $S_j$ will have "few" tuples).
It follows that 
$$P(K_\msg > (1+\delta) L') < 2^{L'} \cdot 2^{m_j-(1+\delta)L'} \cdot (1/2)^{m_j} =(1/2)^{\delta L'} $$

So far we have shown that with high probability each server knows at most $L'$ tuples from
each relation $S_j$, and further that the total number of output tuples is $\Theta(m^{\rho^*(q)})$.
However, if a server knows $L'$ tuples from each relation, using the AGM bound 
from~\cite{DBLP:conf/focs/AtseriasGM08}, it can output at most $(rL)^{\rho^*(q)}$ tuples. The
result follows by summing over the output of all $p$ servers, and using the fact that the 
algorithm has only a constant number of rounds.
\end{proof}

The theorem implies that whenever $\psi^*(q) = \rho^*(q)$ the one-round algorithm is 
essentially worst-case optimal, and using more rounds will not result in an algorithm with 
better load. As a result,
and following our discussion in the previous section, the classes of queries 
$T_k$ and $SP_k$ can be optimally computed in a single round. 
This may seem counterintuitive, but recall
that we study worst-case optimal algorithms; there may be instances where using
more rounds is desirable, but our goal is to match the load for the worst such instance.

We will next present algorithms that match (within a polylogarithmic factor) the above lower bound using strictly more
than one round. We start with the algorithm for the triangle query $C_3$, in order to
demonstrate our novel technique and prove a key result (Lemma~\ref{lem:join:1skew}) 
that we will use later in the section.

\subsection{Warmup: Computing Triangles in 2 Rounds}

The main component of the algorithm that computes triangles is a parallel algorithm
that computes the join $S_1(x,z),S_2(y,z)$ in a single round, for the case where skew 
appears exclusively in one of the two relations. If the relations have size $M_1, M_2$
respectively, then we have shown that the load can be as large as $\sqrt{M_1 M_2 / p}$. 
However, in the case of one-sided skew, we can compute the join with maximum load only 
$\tilde{O}(\max \{ M_1, M_2 \}/p)$.

\begin{lemma} 
\label{lem:join:1skew}
Let $q = S_1(x, z),S_2(y, z)$, and let $m_1$ and $m_2$ be the relation sizes (in tuples) of
$S_1, S_2$ respectively. Let  $m = \max\{m_1, m_2\}$. If the degree of every value of the
variable $z$ in $S_1$, $m_{S_1}(z)$, is at most $m/p$, then we can compute $q$ in a 
single round with $p$ servers and load (in bits) $\tilde{O}(M/p)$, where $M = 2 m \log(n)$
($n$ is the domain size).
\end{lemma}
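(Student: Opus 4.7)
The approach is a hybrid routing strategy: isolate the at-most $p$ values of $z$ that are heavy in $S_2$ and treat them with a broadcast-plus-split trick exploiting the bounded $z$-degree of $S_1$, while delegating the remaining light $z$-values to a standard hashing scheme. I would reserve a constant fraction of the $p$ servers for the heavy values and the rest for the light ones; the exact fractions can be tuned since we only care about $\tilde O(m/p)$ load.

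For the light part, pick a random hash $h_z:[n]\to\Theta(p)$ and route every $(x,z)\in S_1$ and every $(y,z)\in S_2$ with $z$ light (meaning $m_{S_2}(z)=O(m/p)$) to server $h_z(z)$. Since by hypothesis $m_{S_1}(z)\le m/p$ for every $z$ and $m_{S_2}(z)=O(m/p)$ for the light ones, applying \autoref{lem:hashing} to each relation with shares $p_z=\Theta(p)$, $p_x=p_y=1$ yields receive load $\tilde O(m/p)$ tuples, i.e.\ $\tilde O(M/p)$ bits, with high probability. For the heavy part, set $H=\{h : m_{S_2}(h)>c\cdot m/p\}$ for a suitable constant $c$; then $|H|=O(p)$, and I assume the pairs $(h,m_{S_2}(h))$ are available to every server (the same statistical assumption as in the proof of~\autoref{th:one:round:upper}; the list has total size $\tilde O(p)$ and is negligible). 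Allocate to each $h\in H$ a disjoint block of $p_h:=\lceil m_{S_2}(h)\,p/m\rceil$ servers from the reserved pool; since $\sum_{h\in H}p_h=O(p)$, this fits after constant rescaling.

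Within the block for each $h$, broadcast the at most $m/p$ tuples of $S_1$ with $z=h$ (bounded by the hypothesis) to all $p_h$ servers, and range-partition the $m_{S_2}(h)$ tuples of $S_2$ with $z=h$ (say by $y$) into $p_h$ equal chunks, one per server. Each such server then holds all of $S_1$'s tuples for $z=h$ together with a $1/p_h$-fraction of $S_2$'s, and outputs the corresponding slice of the join locally; its receive load is at most $m/p + m_{S_2}(h)/p_h = O(m/p)$ tuples. Total outbound communication across the round is $O(m)$ tuples (the light phase contributes $O(m)$; the heavy phase contributes $\sum_h m_{S_1}(h)\cdot p_h + \sum_h m_{S_2}(h)\le (m/p)\cdot O(p)+m=O(m)$), so provided each broadcast is issued by the original holders of the tuples, the uniform initial partition balances sending loads to $\tilde O(M/p)$ as well. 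The main technical obstacle is the hashing-phase balance, which reduces to \autoref{lem:hashing} once both frequency caps are verified; a secondary concern is the single-round availability of heavy-hitter statistics, handled by the same convention used throughout Section~\ref{sec:one-step}.
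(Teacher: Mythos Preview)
Your proposal is correct and follows essentially the same approach as the paper: split on whether $z$ is heavy in $S_2$, handle the light case by hashing on $z$ and invoking \autoref{lem:hashing}, and for each heavy value $h$ allocate $p_h=\lceil p\cdot m_{S_2}(h)/m\rceil$ dedicated servers, broadcasting the at-most-$m/p$ tuples of $S_1$ with $z=h$ and partitioning the $S_2$-tuples with $z=h$ among them. Your write-up is slightly more explicit than the paper's about the sending-side load and the need for heavy-hitter statistics, but the algorithm and analysis are the same.
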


\begin{proof}
We say that a value $h$ is a heavy hitter in $S_2$ if the degree  of $h$ in $S_2$ is 
$m_{S_2}(h) > m/p$. By our assumption, there are no heavy hitters
in relation $S_1$.

For the values $h$ that are not heavy hitters in $S_2$, we can compute the join by applying
the standard HC algorithm (which is a hash-join that assigns a share of $p$ to $z$); 
the load analysis of Lemma~\ref{lem:hashing} will give us a 
load of $\tilde{O}(M/p)$ with high probability. 

For every heavy hitter $h$, the algorithm computes the subquery 
$q[h / z] = S_1(x,h), S_2(y,h)$, which is equivalent to computing the residual query 
$q_z = S_1'(x), S_2'(y)$, where $S_1'(x) = S_1(x, h)$ and $S_2'(y) = S_2(y,h)$.
We know that $|S_2'| = m_{S_2}(h)$ and $|S_1'| \leq m/p$ by our assumption.
The algorithm now allocates $p_h = \lceil p \cdot m_{S_2}(h) /m \rceil$ exclusive servers
 to compute  $q[h/z]$ for each heavy hitter $h$. To compute $q[h/z]$ with $p_h$ servers,
 we simply use the simple broadcast join that assigns a share of $p$ to variable $x$ and 
 $1$ to $y$. A simple
 analysis will give us that the load (in tuples) for each heavy hitter $h$ is 
$$\tilde{O} \left( \frac{|S_2'|}{ p_h} + |S_1'| \right) = 
\tilde{O} \left( \frac{m_{S_2}(h)}{p \cdot m_{S_2}(h) / m}  + m/p) \right) = 
\tilde{O} (m/p)$$
Finally, observe that the total number of servers we need is $\sum_h p_h \leq 2p$, 
hence we have used an appropriate amount of the available $p$ servers.
\end{proof}

Thus, we can optimally compute joins in a single round  in the 
presence of one-sided skew. We can apply Lemma~\ref{lem:join:1skew} to obtain a
useful corollary for the {\em semi-join} query $q = R(z), S(y,z)$. Indeed, notice that
we can extend $R$ to a binary relation $R'(x,z)$, where $x$ is a dummy variable that
takes a single value; then, the semi-join becomes essentially a join, where $R'$ has no
skew, since the degree of $z$ in $R'$ will be always one. Consequently:

\begin{corollary}
\label{lem:semi-join}
Consider the semi-join query $q = R(z),S(y, z)$, and let $M_1$ and $M_2$ be the relation
sizes of $R, S$ respectively in bits. Then we can compute $q$ in a 
single round with $p$ servers and load $\tilde{O}(\max \{M_1, M_2 \}/p)$.
\end{corollary}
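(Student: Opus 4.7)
The plan is to reduce the semi-join to a join with one-sided skew and then invoke \autoref{lem:join:1skew} as a black box. First, I would extend $R(z)$ to a binary relation $R'(x,z)$ by introducing a fresh dummy variable $x$ that takes a single constant value $c$; concretely, each input tuple $R(a)$ is replaced by $R'(c,a)$. This transformation can be performed locally by each server on its fragment of the input, so it requires no communication and does not change the round count. It also preserves the tuple count ($|R'| = |R|$) and preserves the bit size up to a constant factor, so $M'_1 = \Theta(M_1)$.

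Next, I would note that the semi-join output $q(I) = R(z), S(y,z)$ is identical to the projection of the join $q'(I) = R'(x,z), S(y,z)$ onto $(y,z)$. Since each server can drop the $x$-column from its locally produced tuples with no additional communication, computing $q'$ in a single round with load $L$ immediately yields a one-round algorithm for $q$ with load $L$.

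The key observation is that in $R'$, the degree of every value of $z$ is exactly one, because $x$ takes a single value; in particular, $m_{R'}(z) = 1 \leq m/p$, where $m = \max\{m_1, m_2\}$ (assuming the standard regime $p \leq m$). Therefore $R'$ satisfies the one-sided skew hypothesis of \autoref{lem:join:1skew} with the roles $S_1 := R'$ and $S_2 := S$. Applying the lemma yields a single-round algorithm for $q'$ with load $\tilde{O}(M/p)$, where $M = 2m\log(n) = \Theta(\max\{M_1, M_2\})$, and hence the desired bound $\tilde{O}(\max\{M_1, M_2\}/p)$ for $q$.

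There is essentially no obstacle here: the only thing to double-check is the bit-size bookkeeping, namely that padding $R$ with a constant column does not inflate its size past $\tilde{O}(M_1)$ and that the bound $M = 2m\log n$ from \autoref{lem:join:1skew} translates to $\tilde{O}(\max\{M_1, M_2\})$. Both are immediate because the dummy attribute contributes only $O(1)$ bits per tuple (or can be kept implicit), so the reduction is lossless up to constant factors.
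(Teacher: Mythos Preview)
Your proposal is correct and follows essentially the same approach as the paper: extend $R(z)$ to a binary relation $R'(x,z)$ via a dummy constant attribute so that every $z$-value in $R'$ has degree one, and then invoke \autoref{lem:join:1skew} directly. The paper's argument is given inline just before the corollary and is identical in substance; your version simply spells out a few more bookkeeping details (locality of the padding, bit-size preservation, projecting out the dummy column) that the paper leaves implicit.
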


We now outline the algorithm for computing triangles using two rounds. The central idea
in the algorithm is to identify the values that create skew in the computation, and spread
this computation into more rounds.

\begin{theorem}
The triangle query $C_3 = S_1(x_1, x_2), S_2(x_2, x_3), S_3(x_3, x_1)$ 
on input with sizes $M_1 = M_2 = M_3 = M$ can be computed by
an \mpc\ algorithm in 2 rounds with $\tilde{O}(M/p^{2/3})$ load, under any input
data distribution.
\end{theorem}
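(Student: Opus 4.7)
The plan is to design a 2-round algorithm that isolates skewed values in round 1 and then runs multiple tailored HyperCube subroutines in parallel in round 2, one per ``heaviness profile'' of an output tuple.

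Round 1 detects heavy hitters (values that appear more than $m/p^{1/3}$ times in some relation). Hashing each tuple by each of its attribute values, counting, and broadcasting yields the $O(p^{1/3})$ heavy values per (relation, attribute) pair in a single round with load $\tilde{O}(M/p)$. In round 2, we run eight subroutines in parallel, one for each subset $\bx \subseteq \{x_1, x_2, x_3\}$ of variables where the output tuple is heavy. For $\bx = \emptyset$ (all three coordinates light), standard HyperCube with shares $(p^{1/3}, p^{1/3}, p^{1/3})$ combined with \autoref{lem:hashing} achieves load $\tilde{O}(M/p^{2/3})$. For $|\bx| \geq 2$, the residual collapses to at most one binary relation filtered by unary sets of $O(p^{1/3})$ heavy values, handled by broadcasting within budget. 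The crux is $|\bx| = 1$, say $\bx = \{x_1\}$: for each of the $O(p^{1/3})$ heavy values $h$ of $x_1$, compute the 2-path residual $S_1'(x_2),\, S_2(x_2,x_3),\, S_3'(x_3)$ where $S_1'(x_2) = \{x_2 : (h,x_2) \in S_1\}$ and $S_3'(x_3) = \{x_3 : (x_3,h) \in S_3\}$. Since this profile forces $x_2$ and $x_3$ to be light in every relation, \autoref{lem:join:1skew} applies, and the 2-path reduces to a pair of one-sided-skew semi-joins against $S_2$, executed in a single round on a dedicated group of $p_h$ servers with load $\tilde{O}(\max(|S_1'|, |S_3'|, m)/p_h)$.

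The main obstacle is choosing the allocation $\{p_h\}$ so that $\sum_h p_h \leq p$ while each $p_h$ is large enough to meet the target load $\tilde{O}(M/p^{2/3})$. The idea is to set $p_h$ proportional to $\max(|S_1(h, \cdot)|, |S_3(\cdot, h)|)$ plus a baseline of $p^{2/3}$, using $\sum_h |S_j(h, \cdot)| \leq m$ for $j \in \{1, 3\}$ to bound the total within $O(p)$. The subtlest sub-case is a heavy $h$ with both $|S_1'|$ and $|S_3'|$ simultaneously large; here \autoref{lem:join:1skew}'s one-sided-skew guarantee is precisely what circumvents the single-round $\Omega(M/p^{1/2})$ lower bound, since fixing the heavy coordinate and treating $x_2, x_3$ as light reduces the problem from a full triangle to a single-round semi-join chain whose load matches the target per heavy value.
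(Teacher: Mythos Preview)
The proposal has a real gap in the $|\bx|=1$ case. You claim that for each heavy value $h$ of $x_1$, the residual $q' = S_1'(x_2),\, S_2(x_2,x_3),\, S_3'(x_3)$ can be computed in a \emph{single} round (your round~2) with load $\tilde{O}(\max(|S_1'|,|S_3'|,m)/p_h)$, by ``a pair of one-sided-skew semi-joins against $S_2$'' via \autoref{lem:join:1skew}. But the two semi-joins partition $S_2$ incompatibly: one hashes on $x_2$, the other on $x_3$, so no server ends up holding matching pieces of $S_1'$, $S_2$, and $S_3'$ simultaneously. Producing the actual three-way join requires a reshuffle and intersection, which costs an additional round. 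In fact $q'$ is the generalized semi-join $W_2$, and by \autoref{th:lower:skew} (packing $(1,0,1)$ on a matching instance of $S_2$, which satisfies your lightness constraint on $x_2,x_3$) any one-round algorithm for $q'$ on $p_h$ servers needs load $\Omega\!\left((|S_1'|\cdot|S_3'|/p_h)^{1/2}\right)$. On the worst-case triangle input where $x_1$ takes a single value $h$ with full degree $m$ in both $S_1$ and $S_3$, this is $\Omega(m/p_h^{1/2})$; even allocating $p_h=p$ gives $\Omega(m/p^{1/2})$, which is exactly the one-round lower bound for $C_3$ and strictly exceeds the target $m/p^{2/3}$. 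No choice of $\{p_h\}$ can repair this.

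The paper spends both rounds on the residual itself: round~1 computes the two semi-joins $S_{12}=S_1'(x_2),S_2(x_2,x_3)$ and $S_{23}=S_2(x_2,x_3),S_3'(x_3)$ via \autoref{lem:semi-join}, and round~2 intersects $S_{12}$ with $S_{23}$. Heavy-hitter identities are taken as available input statistics rather than computed in a dedicated round. Your round~1 is thus spent on information the model already grants, leaving only one round where two are provably required. The fix is to adopt the paper's convention on statistics and use both rounds for the semi-join/intersection pipeline, or else accept three rounds.
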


\begin{proof}
We say that a value $h$ is heavy if for some relation $S_j$, we have $m_j(h) > m/p^{1/3}$. We 
first compute the answers for the tuples that are not heavy at any variable. 
Indeed, if for every value we have that the degree is at most $m/p^{1/3}$, then the load analysis 
(Lemma~\ref{lem:hashing})
tells us that we can compute the output in a single round with load $\tilde{O}(M/p^{2/3})$
using the HC algorithm that allocates a share of $p^{1/3}$ to each variable.

Thus, it remains to output the tuples for which at least one variable has a heavy value.
Without loss of generality, consider the case where variable $x_1$ has heavy values
and observe that there are at most $2 p^{1/3}$ such heavy values for $x_1$ ($p^{1/3}$ for $S_1$ and $p^{1/3}$ for $S_3$). 
For each heavy value $h$, we assign an {\em exclusive} set of $p' = p^{2/3}$ servers
in order to compute the query $q[h/x_1] = S_1(h, x_2), S_2(x_2, x_3), S_3(x_3, x_1)$, which 
is equivalent to computing the residual query 
$$q' = S_1'(x_2), S_2(x_2, x_3), S_3'(x_3).$$

To compute $q'$ with $p'$ servers, we use 2 rounds. In the first round, we compute in
parallel the semi-join queries $S_{12}(x_2, x_3) = S_1'(x_2), S_2(x_2, x_3)$ and 
$S_{23}(x_2, x_3) = S_2(x_2, x_3), S_3'(x_3)$.
Since $|S_1'| \leq m$ and $|S_2'| \leq m$, we can apply Corollary~\ref{lem:semi-join} for
semi-join computation to obtain that we can achieve this computation with load (in tuples)
$\tilde{O}(m/p') = \tilde{O}(m/p^{2/3})$.
Observe that the intermediate relations $S_{12}, S_{23}$ have size at most $m$.
In the second round, we simply perform the intersection of the relations
$S_{12}, S_{23}$; this can be achieved with 
tuple load $O(m/p') = O(m/p^{2/3})$. Observe that the load for computing the 
intersection of two or more relations does not have any additional logarithmic factors.
\end{proof}

Notice that the 2-round algorithm achieves a better load than the 1-round
algorithm in the worst-case scenario. Indeed, in the previous section we proved that there
 exist instances
for which we can not achieve load better than $O(M/p^{1/2})$ in a single round. By using an
additional round, we can beat this bound and achieve a better load. This confirms our intuition
that with more rounds we can reduce the maximum load. Moreover,  observe that the load
achieved matches the multi-round lower bound (within a polylogarithmic factor).

\subsection{Computing General Conjunctive Queries}

We now generalize the ideas of the above example, and extend our results to several
standard classes of conjunctive queries. Throughout this section, we assume that all relations have the same size $M$ in bits (and $m$ in tuples).  
We present in detail multiround algorithms for the line query $L_k$, the cycle query $C_k$, the Loomis-Whitney join $LW_k$ and the clique (full) query $K_k$. 

\subsubsection{Line Queries}

We start by studying the line query 
$$L_k = S_1(x_0, x_1), S_2(x_1, x_2), \dots, S_k(x_{k-1}, x_{k}).$$

\begin{lemma}\label{lem:line}
The line query  $L_k$ can be computed by an \mpc\ algorithm in at most $\lfloor k/2 \rfloor$ rounds with load $\tilde{O}(M/p^{1/\lceil (k+1)/2 \rceil})$ for any $k \geq 2$.
\end{lemma}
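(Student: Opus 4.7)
The plan is to prove the lemma by strong induction on $k$.

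For the base cases $k \in \{2, 3\}$, one has $\psi^*(L_k) = 2 = \lceil (k+1)/2 \rceil$, so \autoref{th:one:round:upper} already delivers a single-round algorithm meeting the target load, and $\lfloor k/2 \rfloor = 1$ suffices.

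For the inductive step $k \geq 4$, I fix the threshold $\tau = m/p^{1/\lceil(k+1)/2\rceil}$ and call a value at a variable $x_i$ \emph{heavy} if it appears more than $\tau$ times in some relation containing $x_i$. Each output tuple is classified by the smallest index $j$ such that its $x_j$-value is heavy, or by $\bot$ if no such index exists; all classes are dispatched in parallel to disjoint pools of servers. The class $\bot$ is handled in a single round by HyperCube on $L_k$ with shares taken from an optimal fractional edge packing, which by \autoref{lem:hashing} yields load $\tilde{O}(M/p^{1/\tau^*(L_k)}) = \tilde{O}(M/p^{1/\lceil k/2 \rceil})$, meeting the target since $\tau^*(L_k) \leq \rho^*(L_k) = \lceil(k+1)/2\rceil$.

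For each $j \in \{1, \ldots, k-1\}$ and each heavy value $v$ at $x_j$ (at most $2 p^{1/\lceil(k+1)/2\rceil}$ per variable), I allocate $p' = \Theta(p^{1 - 1/\lceil(k+1)/2\rceil})$ dedicated servers. Substituting $x_j = v$ decouples the query into a left chain of length $j-1$ over $S_1,\dots,S_{j-1}$ capped by a unary relation $S_j'(x_{j-1})$ (in which every variable is light, by the leftmost-heavy condition), and a right chain of length $k-j-1$ over $S_{j+2},\dots,S_k$ capped by $S_{j+1}'(x_{j+1})$. The left chain is computed in a single round by HC using its own optimal edge packing, while the right chain is handled recursively via the inductive hypothesis on the $p'$ dedicated servers, with the unary cap absorbed by a preliminary semi-join reduction using \autoref{lem:semi-join}.

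The load verification, after substituting $p'$, reduces to the inequalities $\lceil (j-1)/2 \rceil, \lceil (k-j)/2 \rceil \leq \lceil (k+1)/2 \rceil - 1$ for $1 \leq j \leq k-1$, both of which hold routinely. The round count is dominated by the recursion on the longest right chain, which by induction uses at most $\lfloor (k-2)/2 \rfloor \leq \lfloor k/2 \rfloor$ rounds. The main obstacle will be the careful server-counting and load arithmetic across the $O(k)$ parallel classes, and checking that the unary caps appended to the residual chains do not inflate the inductive complexity --- which follows because $\tau^*$ and $\rho^*$ of a line query augmented by a unary endpoint relation coincide with those of the unaugmented line.
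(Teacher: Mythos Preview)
Your decomposition by leftmost heavy variable is genuinely different from the paper's parity-based argument (which for odd $k$ looks only at heaviness of $x_1$, and for even $k$ simply splits off $S_k$ as a cartesian factor). However, the load verification you sketch contains a real gap.

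After substituting $x_j=v$, the residual query is a \emph{cartesian product} of the left and right chains. To compute a cartesian product on $p'$ servers you must arrange them as a $p_L\times p_R$ grid with $p_L\cdot p_R=p'$, assigning the left chain to the $p_L$-axis and the right chain to the $p_R$-axis. The left chain needs $p_L\ge p^{\lceil(j-1)/2\rceil/\lceil(k+1)/2\rceil}$ to hit the target load via HC, and the right chain needs $p_R\ge p^{\lceil(k-j)/2\rceil/\lceil(k+1)/2\rceil}$ via induction. Hence the correct constraint is
\[
\lceil(j-1)/2\rceil+\lceil(k-j)/2\rceil\;\le\;\lceil(k+1)/2\rceil-1,
\]
not the two separate bounds you state. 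This sum inequality \emph{fails} whenever $k$ is odd and $j$ is even: with $a=j-1$ and $b=k-j$ both odd, $\lceil a/2\rceil+\lceil b/2\rceil=(k+1)/2$, which exceeds $\lceil(k+1)/2\rceil-1=(k-1)/2$. Concretely, take $k=5$, $j=2$: then $p'=p^{2/3}$, the left chain $S_1(x_0,x_1),S_2'(x_1)$ needs $p_L\ge p^{1/3}$, and the right chain (an $L_2$ after the semi-join) needs $p_R\ge p^{2/3}$, so $p_L p_R\ge p>p'$. There is no way to allocate the $p'$ servers to meet the target on both sides, and running both chains on the full $p'$ and then combining does not yield the cartesian product without another expensive round.

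The paper avoids this by never splitting at an arbitrary internal variable. For odd $k$ it tests heaviness only at $x_1$: in the light case it peels off the two leftmost relations as a single join (handled by \autoref{lem:join:1skew}, which needs only one-sided lightness) and recurses on $L_{k-2}$; in the heavy case it fixes $x_1$ and again recurses on $L_{k-2}$. This keeps the two cartesian factors of sizes that always satisfy the sum constraint. If you want to salvage your leftmost-heavy scheme, you would need a more delicate share allocation on the left chain that exploits lightness beyond what $\tau^*$ alone gives you, or else collapse the odd-$k$, even-$j$ cases back into the paper's style of recursion.
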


We will describe the algorithm using induction on the length $k$ of the query $L_k$. For the base case, observe that for any $k \leq 4$, $\psi^*(L_k) = \rho^*(L_k)$ and thus the one-round algorithm from the previous section is optimal. Suppose now that we want to compute the query $L_k$ for $k \geq 5$. We now distinguish two cases, depending on whether $k$ is odd or even. \\

\noindent \textbf{Even Length.} Let $k=2n$. In this case, we will show that we can achieve a load of $\tilde{O}(M/p^{1/(n+1)})$. The algorithm, instead of computing $L_k$ directly, computes the cartesian product of $L_{k-1}$ with the single relation $S_k$. We assign $p_0 = p^{n/(n+1)}$ servers to compute $L_{k-1}$ and $p_1 = p^{1/(n+1)}$ servers for $S_k$. Notice that $p = p_0 \cdot  p_1$, and thus the cartesian product can indeed be computed using $p$ servers. By the induction hypothesis, to compute $L_{k-1}$ with $p_0$ servers we need $\lfloor (k-1)/2 \rfloor \leq \lfloor k/2 \rfloor$ rounds and load $\tilde{O}(M/p_0^{1/\lceil k/2 \rceil})$, where $p_0^{1/\lceil k/2 \rceil} = p^{(1/n) \cdot n/(n+1)} = p^{1/(n+1)}$. We can further distribute $S_k$ (in one round) such that the load is $O(M/p_1) = O(M/p^{1/(n+1)})$. \\

\noindent \textbf{Odd Length.} Let $k=2n-1$. In this case, we will show that we can achieve a load of $\tilde{O}(M/p^{1/n})$. We say that the variable $x_1$ is heavy if its degree in $S_0$ is at least $m/p^{1/n}$. We distinguish two cases:
\begin{enumerate}
\item {\em Light $x_1$:} We compute the cartesian product of $q_0 = S_3(x_2, x_3), \dots S_k(x_{k-1}, x_k)$ with the join $q_1 = S_1(x_0,x_1), S_2(x_1, x_2)$. We assign $p_0 = p^{(n-1)/n}$ servers to $q_0$ and $p_1 = p^{1/n}$ servers to $q_1$. Observe again that $p = p_0 \cdot p_1$. Since $q_0$ is isomorphic to $L_{k-2}$, by the induction hypothesis we can compute $q_0$ using $\lfloor (k-2)/2 \rfloor \leq \lfloor k/2 \rfloor$ rounds with load $\tilde{O}(M/p_0^{1/\lceil (k-1)/2 \rceil})$, where $p_0^{1/\lceil (k-1)/2 \rceil} =  p^{1/n}$. As for $q_1$, notice that it is a join that we want to compute with $p_1$ servers. By~\autoref{lem:join:1skew} and the fact that $x_1$ is light, we can compute $q_1$ with load $\tilde{O}(M/p_1) = \tilde{O}(M/p^{1/n})$ in one round.
\item {\em Heavy $x_1$:} Let $d_1, \dots, d_r$ be the degrees of the heavy hitters in $S_1$: we have that $\sum_{i=1}^r d_i \leq m$. For each heavy hitter $h_i$ with degree $d_i$, we assign $p^{(i)} = p^{(i)}_0 \cdot p^{(i)}_1$ servers, where $p^{(i)}_0 = \frac{d_i}{m} \cdot p^{1/n}$ and  $p^{(i)}_1 =   p^{(n-1)/n}$. Notice that $\sum_{i=1}^r p^{(i)} \leq p$. To compute the query $L_k[h_i / x_1] = S_1(x_0, h_i), S_2(h_i, x_2), \dots,  S_k(x_{k-1}, x_{k})$, we compute the cartesian product of $q_0 = S_1'(x_0)$ and $q_1 = S_2'(x_2), \dots,  S_k(x_{k-1}, x_{k})$. We use $p^{(i)}_0$ servers to distribute $S_1'$ and achieve a tuple load of $O(d_i/p^{(i)}_0) = O(m/p^{1/n})$. We use $p_1^{(i)}$ servers to compute $q_1$: in the first round, we compute the semijoin $S_2'(x_2), S_3(x_2, x_3)$, and we are left then with $L_{k-2}$, which by the induction hypothesis we can compute in $\tilde{O}(M/(p^{(i)}_1)^{1/\lceil (k-1)/2 \rceil}) = \tilde{O}(M/p^{1/n})$. The total number of rounds for this case is $1 + \lfloor (k-2)/2 \rfloor = \lfloor k/2 \rfloor$ rounds.
\end{enumerate}

\subsubsection{Cycle Queries}

We next consider the cycle query
$$ C_k = S_1(x_1, x_2), S_2(x_2, x_3), \dots, S_k(x_k,x_1)$$ 

\begin{lemma}\label{lem:cycle}
The cycle query  $C_k$ can be computed by an \mpc\ algorithm in at most $\lceil k/2 \rceil$ rounds with load $\tilde{O}(M/p^{2/k})$ for any $k \geq 3$.
\end{lemma}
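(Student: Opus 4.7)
My plan is to prove \autoref{lem:cycle} by induction on $k$, mirroring the structure of the triangle algorithm from the warmup subsection. For the base cases, $k=3$ is handled directly by the 2-round triangle algorithm of the previous subsection, and $k=4$ by \autoref{th:one:round:upper}: since $\psi^*(C_4)=\rho^*(C_4)=2$, a single round suffices with load $\tilde{O}(M/p^{1/2})$, safely within the budget $\lceil 4/2\rceil = 2$.

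For the inductive step $k\ge 5$, I would call a value $h$ heavy at variable $x_i$ if it appears more than $m/p^{1/k}$ times in some relation containing $x_i$, and then run two branches of the algorithm in parallel on disjoint subsets of the $p$ servers. The first (light) branch handles tuples in which no variable is heavy: applying \autoref{lem:hashing} to the HC algorithm with share $p^{1/k}$ per variable gives a single round with load $\tilde{O}(M/p^{2/k})$, since the frequency bound $m/p^{1/k}$ is met everywhere. The second (heavy) branch enumerates over each variable $x_i$ and each heavy value $h$ for $x_i$, computing the subquery $q[h/x_i]$ on an allocated exclusive set of $p_h = \lceil p \cdot d_h/m\rceil$ servers, where $d_h$ is the degree of $h$. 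Since $\sum_h d_h \le 2m$ per variable and $k$ is constant, the total server use is $O(p)$; also $d_h \ge m/p^{1/k}$ forces $p_h \ge p^{(k-1)/k}$. Substituting $x_i = h$ collapses the two cycle edges incident to $x_i$ into unary relations $A$ and $B$ of size at most $d_h$ each, so the residual is precisely $L_{k-2}^{\dag}$: a length-$(k-2)$ line query with a unary constraint at each endpoint.

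To solve each $L_{k-2}^{\dag}$ subproblem within $\lceil k/2\rceil - 1$ additional rounds, I would first use one round of the one-sided-skew semijoin primitive (\autoref{lem:semi-join}) to absorb $A$ and $B$ into their adjacent binary relations, and then invoke \autoref{lem:line} on the resulting length-$(k-2)$ line query in $\lfloor(k-2)/2\rfloor$ further rounds. The total is $1+\lfloor(k-2)/2\rfloor \le \lceil k/2\rceil$ in either parity.

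The main obstacle is the load calculation for the heavy branch. The semijoin round contributes $\tilde{O}(M/p_h) \le \tilde{O}(M/p^{(k-1)/k})$, comfortably below the target; the harder part is the line computation, whose direct bound $\tilde{O}(M/p_h^{1/\lceil(k-1)/2\rceil})$ must be combined with the fact that after the semijoin the two endpoint binary relations have size at most $d_h$, not $m$. Plugging these skewed sizes into the LP duality underlying \autoref{lem:line} and then using the balance $p_h \propto d_h/m$, the ratio $d_h/p_h^{1/\rho^*}$ telescopes to $m/p^{1/\rho^*(C_k)} = m/p^{2/k}$, exactly as the analogous $p_h \propto d_h$ balance does in the triangle proof. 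I expect the $k$ even bookkeeping to be the tightest step: the naive line bound on $p_h$ servers alone is insufficient, and recovering the target $\tilde{O}(M/p^{2/k})$ crucially relies on exploiting the small size $d_h$ of the endpoint unaries in the residual.
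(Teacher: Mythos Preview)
For odd $k$, your approach essentially matches the paper's: fix a heavy variable, reduce to $L_{k-2}^{\dag}$, absorb the unary endpoints via semijoin, and invoke \autoref{lem:line}. (The paper uses a fixed allocation of $p^{1-1/k}$ servers per heavy value rather than your degree-proportional $p_h$; either works for odd $k$, so the degree-proportional choice buys nothing here.)

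For even $k$, however, there is a genuine gap. First, your claim that ``after the semijoin the two endpoint binary relations have size at most $d_h$'' is false: the semijoin $S_1'(x_2),S_2(x_2,x_3)$ returns a subset of $S_2$ and bounds only the number of distinct $x_2$-values by $d_h$, not the tuple count, which can still be $m$. Second, even if you keep the unary relations of size $d_h$ and try to exploit them directly, \autoref{lem:line} is stated and proved only for equal relation sizes; ``plugging these skewed sizes into the LP duality underlying \autoref{lem:line}'' would require a new unequal-size multi-round line algorithm that the paper does not provide. Concretely, for $k=6$ the minimum $p_h$ is $p^{5/6}$, and the equal-size line bound yields load $m/p_h^{1/3}=m/p^{5/18}$, strictly worse than the target $m/p^{1/3}$; your degree-proportional allocation does not close this gap on its own.

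The paper handles even $k$ by a completely different mechanism. It does \emph{not} fix a single heavy variable. Instead it splits into two cases: (i) there exist two variables $x_i,x_{i'}$ at \emph{odd} distance that are $\delta$- and $\delta'$-heavy with $\delta+\delta'\le 2/k$; fixing both simultaneously breaks the cycle into a cartesian product of two \emph{odd}-length lines, where the arithmetic $\rho^*(L_\alpha)=(\alpha+1)/2$ makes the load come out to $m/p^{2/k}$; (ii) otherwise, a parity argument shows one can choose HC shares $p_o=p^{\delta_{\text{odd}}}$ for odd-indexed variables and $p_e=p^{2/k-\delta_{\text{odd}}}$ for even-indexed variables so that no variable is skewed relative to its share, and a single HC round suffices. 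This even/odd balancing (and the ``fix two variables at odd distance'' reduction) is the key idea your proposal is missing.
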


The algorithm that computes $C_k$ will use as a component the algorithm that computes the line query $L_k$. As with the case of the line query, we will distinguish two cases, depending on whether $k$ is odd or even. \\

\noindent \textbf{Odd Length.} The algorithm is a generalization of the method for 
computing triangle queries presented as a warmup example. We say that a value $h$ is {\em heavy} for variable $x_i$ if for relation $S_{i-1}$ or $S_i$,
we have $m_i(h) > m/p^{1/k}$ or $m_{i-1}(h) > m/p^{1/k}$. We 
first compute the answers for the tuples that are not heavy at any position.
Lemma~\ref{lem:hashing} implies that we can compute the output in a single round with 
load $\tilde{O}(M/p^{2/k})$, by applying the vanilla HC algorithm for cycles, where each
variable has equal share $p^{1/k}$.

We next compute the tuples that are heavy at variable $x_1$ (we similarly do this for 
every variable $x_i$);  observe that there are at most $2p^{1/k}$ such values. 
For each such heavy value $h$, we will assign an exclusive number of 
$p' = p^{1-1/k}$ servers, such that the total number of servers we use is $(2p^{1/k}) \cdot p'
= \Theta(p)$, and using these servers we will compute the query 
$q[h/x_1] = S_1(h, x_2), \dots, S_k(x_k, h)$, which amounts to computing the residual query 
$q' = q_{x_1}$:
$$q' = S_1'(x_2), S_2(x_2, x_3), \dots, S_{k-1}(x_{k-1}, x_k), S_k'(x_k)$$

To compute $q'$ with $p'$ servers we need several rounds of computation. 
In round one, we compute in parallel the two semi-joins 
\begin{align*}
S_{1,2}(x_2, x_3) = S_1'(x_2), S_2(x_2, x_3), \quad \quad
S_{k,k-1}(x_{k-1}, x_k)  = S_{k-1}(x_{k-1}, x_k), S_k'(x_k)
\end{align*}
which can be achieved with tuple load $\tilde{O}(m/p') = \tilde{O}(m/p^{1-1/k})$, 
since $|S_1'| \leq m$ and $|S_k'| \leq m$ (by applying Corollary~\ref{lem:semi-join}).
Since for any $k \geq 3$ we have $1-1/k \geq 2/k$, the load for the first round will be
$\tilde{O}(M/p^{2/k})$. Next we compute the query
$$ q'' = S_{1,2}(x_2, x_3), S_3(x_3, x_4) ,\dots, S_{k-1}(x_{k-1}, x_k), S_{k,k-1}(x_{k-1}, x_k)$$
which is isomorphic to the line query $L_{k-2}$, where each relation has
size at most $m$. We know from ~\autoref{lem:line} that we can compute such a 
query with tuple load $\tilde{O}(m/p'^{1/ \lceil (k-1)/2 \rceil}) = \tilde{O}(m/p^{2/k})$ using $\lfloor (k-2)/2 \rfloor$ rounds. Thus we need $\lfloor k/2 \rfloor$ total rounds and load $\tilde{O}(M/p^{2/k})$. \\

\noindent \textbf{Even Length.} 
For even length cycles, our previous argument does not work, and we have to use a
different approach. We say that a value $h$ is $\delta$-heavy, for some $\delta \in [0,1]$,
if the degree of $h$ is at least $m/p^{\delta}$ in some relation.
We distinguish two different cases:

\begin{enumerate}
%
\item Suppose that there exist two variables $x_i, x_{i'}$ such that $(i-i')$ is an odd number, 
$x_i$ is $\delta$-heavy, $x_{i'}$ is $\delta'$-heavy, and $\delta+ \delta' \leq 2/k$. Observe
that there are at most $p^{\delta+ \delta'} \leq p^{2/k}$ such pairs of heavy values: for each
such pair, we assign $p' = p^{1-2/k}$ explicit servers to compute the residual query
$q' = (C_k)_{x_i, x_j}$ in two rounds. We now consider two subcases. If $i' = i+1$, then
$x_i, x_{i'}$ belong in the same relation $S_i$. Then, by performing the semi-join computations
in the first round, we reduce the computation of the next rounds to the residual query $L_{k-3}$, which requires 
tuple load $\tilde{O}(m/p'^{1/\lceil (k-2)/2 \rceil}) = \tilde{O}(m/p^{2/k})$, since $k$ is even. 
Otherwise, if $x_i,x_{i'}$ are not in the same relation, we still do the semi-joins in the first round,
and then notice that in the subsequent rounds we need to compute the cartesian product of two line queries, $L_{\alpha},
L_{\beta}$, where $\alpha + \beta = k-4$ and both are odd numbers. To perform this cartesian
product, we will split the $p'$ servers into a $p^{(\alpha+1)/k} \times p^{(\beta+1)/k}$ grid, and
within each row/column compute the line queries. Then, the tuple load will be
$\tilde{O}(m/p^{((\alpha+1)/k) \cdot (1/\lceil (\alpha+1)/2 \rceil)}) = 
\tilde{O}(m/p^{((\beta+1)/k) \cdot (1/\lceil (\beta+1)/2 \rceil)}) = \tilde{O}(m/p^{2/k})$. 

\item Otherwise, define $\delta_{even}$ as the largest number in $[0,1]$ such
that for every even variable the frequency is at most $m/p^{\delta_{even}}$.
Similarly define $\delta_{odd}$. Since we do not fall in the previous case, it
must be that $\delta_{even} + \delta_{odd} \geq 2/k$. W.l.o.g. assume that
$\delta_{even} \geq \delta_{odd}$. Then, consider the HC algorithm with the
following share allocation: for odd variables assign $p_o = p^{\delta_{odd}}$,
and for even variables assign $p_e = p^{2/k-\delta_{odd}}$. Since the odd
variables have degree at most $m/p^{\delta_{odd}}$, there are no skewed values
there. As for the even variables, their degree is at most $m/p^{\delta_{even}} 
\leq m/p^{2/k-\delta_{odd}} = m/p_e$. Hence, the tuple load achieved will be 
$\tilde{O}(m/(p_o p_e) = \tilde{O}(m/p^{2/k})$.
 In the case where $p_e$ is ill-defined
because $\delta_{odd} > 2/k$, we also have that $\delta_{even} > 2/k$ and in
this case we can just apply the standard HC algorithm that assigns a share of
$p^{1/k}$ to every variable.  
\end{enumerate}

\subsubsection{Loomis-Whitney Joins}

We show here how to compute the the Loomis-Whitney join
$$ LW_k = S_1(x_2, \dots, x_k), S_2(x_1, x_3, \dots, x_k), \dots, S_k(x_1, \dots, x_{k-1})$$
using a 2-round algorithm.  Notice that $LW_3$ is the triangle
query $C_3$; as we will see, the algorithmic idea here is the same as the one for
even cycles.

\begin{lemma}
The Loomis-Whitney join $LW_k$ can be computed by an MPC
algorithm with 2 rounds and load $\tilde{O}(M/p^{1-1/k})$ for any $k \geq 3$.
\end{lemma}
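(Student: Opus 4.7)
The plan is to generalize the two-round triangle ($LW_3=C_3$) algorithm from the warm-up in Section~\ref{sec:multi-step}. Call a value $h$ \emph{heavy} for a variable $x_i$ if $m_{S_j}(h)>m/p^{1/k}$ for some relation $S_j$ whose attribute set contains $x_i$; each variable then has at most $O(p^{1/k})$ heavy values. Output tuples with no heavy coordinate are handled by the vanilla HC algorithm on all $p$ servers with share $p^{1/k}$ per variable: every $LW_k$ relation has arity $k-1$, so the product of shares along each relation is $p^{(k-1)/k}$, and since no position carries frequency above $m/p^{1/k}$, Lemma~\ref{lem:hashing} gives single-round load $\tilde{O}(M/p^{(k-1)/k})$.

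For each heavy pair $(x_i,h)$ I reserve a disjoint block of $p'=\Theta(p^{(k-1)/k})$ servers to compute the residual $(LW_k)_{x_i=h}$; since there are $O(p^{1/k})$ heavy values per variable, the total reservation is $O(k\cdot p^{1/k}\cdot p')=O(p)$ for constant $k$. The residual consists of $S_i$ on the $k-1$ variables $Y=\vars{q}\setminus\{x_i\}$, together with relations $S'_j$ on $Y\setminus\{x_j\}$ for every $j\neq i$, all of size at most $m$; an output of the residual is any $t\in S_i$ whose projection onto $Y\setminus\{x_j\}$ lies in $S'_j$ for every $j\neq i$.

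I compute the residual in two rounds. In round 1, for each $j\neq i$ in parallel I compute the semi-join $T_j=S_i\ltimes S'_j\subseteq S_i$ by treating the $(k-2)$-dimensional shared attribute set $Y\setminus\{x_j\}$ as a single composite key and padding $S'_j$ with a dummy attribute; since $S'_j$ is a set, its composite-key frequency is at most $1$, so Corollary~\ref{lem:semi-join} (with the composite playing the role of $z$ in Lemma~\ref{lem:join:1skew}) produces each $T_j$ with load $\tilde{O}(\max\{|S_i|,|S'_j|\}/p')=\tilde{O}(m/p')$. In round 2, I rehash every surviving tuple of every $T_j$ by its full $(k-1)$-dimensional value; the server assigned to a tuple $t$ emits it (extended by $x_i=h$) iff it received $t$ from all $k-1$ of the sets $T_j$. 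Because $\sum_{j\neq i}|T_j|\le(k-1)m$, this round also has load $\tilde{O}(m/p')=\tilde{O}(M/p^{(k-1)/k})$.

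The principal obstacle is the load bound for the round-1 semi-join: $S_i$ may carry arbitrary skew on its $(k-2)$-dimensional shared key, and naively that would inflate the load, but because the filtering relation $S'_j$ is a set (frequency $1$ on the composite), the one-sided-skew hypothesis of Lemma~\ref{lem:join:1skew} is met and the HC hash on the composite key distributes both relations evenly. Duplicate emission caused by tuples that are heavy in several coordinates is eliminated by the standard convention of assigning responsibility to the block with the smallest heavy index. The non-heavy HC subalgorithm and the 2-round residual subalgorithms run in parallel on disjoint server blocks, so the overall algorithm uses 2 rounds with maximum load $\tilde{O}(M/p^{1-1/k})$.
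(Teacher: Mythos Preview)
Your proposal is correct and follows essentially the same two-round strategy as the paper: vanilla HC with share $p^{1/k}$ per variable on the light part, and for each heavy value of each variable a dedicated block of $p'=p^{(k-1)/k}$ servers that first computes the $k-1$ semi-joins $S_i\ltimes S'_j$ (via Corollary~\ref{lem:semi-join} with the shared $(k-2)$-tuple treated as a single key) and then intersects the results. Your explicit remarks on why the composite-key semi-join satisfies the one-sided-skew hypothesis and on assigning each output tuple to the smallest heavy index are just spelling out details the paper leaves implicit.
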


The algorithm sets the threshold for a heavy hitter to $p^{1/k}$. For the case of tuples where no values is heavy, the one-round HC algorithm that assigns a share of $p^{1/k}$ to each variable achieves the desired load of $\tilde{O}(M/p^{1-1/k})$.

Consider now the tuples where $x_1$ is heavy (we do this similarly for all variables). 
For each heavy hitter value $h$, we assign $p' = p^{1-1/k}$ explicit servers to compute
the residual query $q' = q_{x_1}$, where
$$ q_{x_1} = S_1(x_2, \dots, x_k) \bigwedge_{j=2}^k S_j'(x_2, \dots, x_{j-1}, x_{j+1}, \dots, x_k)$$
and the size of each relation is at most $m$. We will need 2 rounds to compute this query.
Since for every $j=2, \dots, k$ we have that
$\vars{S_j'} \subseteq \vars{S_1}$, in the first round we will compute for every $j \geq 2$ the
semijoins $S_j'' = S_1(x_2, \dots, x_k), S_j'(x_2, \dots, x_{j-1}, x_{j+1}, \dots, x_k)$, which we can
do with load $\tilde{O}(M/p') = \tilde{O}(M/p^{1-1/k})$ by applying Corollary~\ref{lem:semi-join}.
Notice that in this case we join on multiple variables, but the corollary still holds, since we can hash on all the joining variables. In the second round, we will compute the
intersection $\bigwedge_{j=2}^k S_j''(x_2, \dots, x_k)$, which can be achieved with load 
$O(M/p') = O(M/p^{1-1/k})$.

\subsubsection{Clique Queries} 

We next present an algorithm for the clique query
$$ K_k = \bigwedge_{1\le i<j \le k} S_{i,j}(x_i, x_j).$$
Again, notice that $K_3$ is the triangle query $C_3$; however, for this class of queries we will use
an algorithm that requires in general more than 2 rounds. It is not clear whether 2 rounds could be sufficient for achieving the required load.
We will prove the following result.

\begin{lemma}
The clique query $K_k$ can be computed by an MPC algorithm in $k-1$ rounds with load $\tilde{O}(M/p^{2/k})$
for any $k \geq 3$.
\end{lemma}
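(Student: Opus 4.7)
The plan is to proceed by induction on $k$, using the triangle algorithm of the previous subsection as the base case $k=3$ (which achieves $2 = k-1$ rounds with load $\tilde{O}(M/p^{2/3})$). In the inductive step $k \ge 4$, the strategy is to use one extra round to peel off the heavy hitters of a single variable, thereby reducing the problem to an instance of $K_{k-1}$ on a smaller pool of servers, and then invoke the inductive hypothesis.

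Concretely, the first round performs two tasks in parallel on disjoint blocks of servers. With heaviness threshold $m/p^{1/k}$ (a value is heavy at $x_i$ if it has frequency exceeding $m/p^{1/k}$ in some relation containing $x_i$), the vanilla HyperCube algorithm with share $p^{1/k}$ per variable delivers, by \autoref{lem:hashing}, all output tuples that are light at every variable with load $\tilde{O}(M/p^{2/k})$. In parallel, for each of the at most $k(k-1)\,p^{1/k}$ heavy pairs $(x_i,h)$ (each of the $k$ variables participates in $k-1$ relations, and each relation contributes at most $p^{1/k}$ heavy values at that variable), I dedicate an exclusive block of $p' := p^{(k-1)/k}$ servers to compute the residual query $q' := (K_k)_{x_i}[h/x_i]$; since $k$ is a constant, the total server budget $O(k^2 p)$ fits into $p$ after rescaling.

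The residual $q'$ is an instance of $K_{k-1}$ on the $k-1$ surviving variables, augmented by the unary relations $U_j(x_j) := \{a : (h,a) \in S_{i,j}\}$ for every remaining $j$, each of size at most $m$. I plan to fold each unary into the incident binary relations by the semijoins $S_{j,j'}''(x_j,x_{j'}) := S_{j,j'}(x_j,x_{j'}) \bowtie U_j(x_j) \bowtie U_{j'}(x_{j'})$; by \autoref{lem:semi-join} each $2$-way semijoin runs in a single round on $p'$ servers with load $\tilde{O}(M/p') = \tilde{O}(M/p^{(k-1)/k})$, which is at most $\tilde{O}(M/p^{2/k})$ for every $k \ge 3$. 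Once folded, $q'$ becomes a clean instance of $K_{k-1}$ on relations of size at most $m$, which the inductive hypothesis solves on $p'$ servers in $k-2$ rounds with load $\tilde{O}(M/(p')^{2/(k-1)}) = \tilde{O}(M/p^{((k-1)/k)\cdot 2/(k-1)}) = \tilde{O}(M/p^{2/k})$.

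The main obstacle will be scheduling: the two semijoin rounds per heavy branch must be pipelined inside the $k-2$ rounds of the recursive call rather than prepended, as the latter would blow the total round count up to $k$. The idea is to piggyback each unary's semijoin onto whichever round of the recursive algorithm is about to hash on that unary's dimension anyway, exploiting the fact that the recursive $K_{k-1}$ algorithm itself peels variables one at a time and its per-round data movement already supplies the hash structure needed for the semijoin. After verifying this scheduling, summing rounds across the heavy and light cases yields $1 + (k-2) = k-1$ rounds at maximum load $\tilde{O}(M/p^{2/k})$, completing the induction.
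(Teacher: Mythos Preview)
Your inductive skeleton matches the paper's approach (light/heavy split, semijoins on the heavy branch, then recurse on $K_{k-1}$ with $p' = p^{(k-1)/k}$ servers). The gap is in the semijoin step and the round accounting that follows. You insist on computing $S''_{j,j'} = S_{j,j'} \bowtie U_j \bowtie U_{j'}$ for every surviving binary, which is the query $W_2$ and genuinely costs two rounds at the target load. You then try to recover the lost round by ``piggybacking'' one of these semijoins onto the recursive $K_{k-1}$ algorithm. But that algorithm does not peel one variable per round in any fixed order: its very first round runs the full HyperCube on all $k-1$ variables for the recursive light subcase and must already emit final answers there. If the outer unary filters have not yet been applied, those answers are wrong; and if you instead ship the raw $U_j$'s into that HyperCube, each contributes load $M/(p')^{1/(k-1)} = M/p^{1/k}$, which exceeds the target $M/p^{2/k}$. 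So the piggybacking, as sketched, does not go through.

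The fix the paper uses is simpler than what you attempted: semijoin each unary $U_j$ with \emph{one} binary containing $x_j$, choosing distinct binaries for distinct unaries (possible since $\binom{k-1}{2} \ge k-1$ for $k \ge 4$). These $k-1$ semijoins are independent and all run in a single round with load $\tilde{O}(M/p')$ by Corollary~\ref{lem:semi-join}. The resulting filtered binaries still form a clean $K_{k-1}$ instance with each relation of size at most $m$, and the unary constraints are automatically enforced by the clique join: any tuple satisfying all binaries in particular satisfies the one filtered on $x_j$, hence has its $x_j$-value in $U_j$. Round $1$ then does the light-case HyperCube and, in parallel on the heavy blocks, the $k-1$ semijoins; rounds $2$ through $k-1$ run the recursive call. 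Total $k-1$ rounds, no pipelining needed.
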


\begin{proof}
We will use induction on the parameter $k$.
We have already shown the result for the base case $k=3$ (the triangle query), where we need 2 rounds to compute
the triangle query $C_3$. Consider now $K_k$; we set the heaviness threshold as usual at
$p^{1/k}$. For the tuples that contain no heavy values, the HC algorithm that assigns a share of
$p^{1/k}$ to each variable guarantees the desired load of $\tilde{O}(M/p^{2/k})$.

Consider next w.l.o.g. the tuples for which variable $x_1$ is heavy. 
There are at most $(k-1)p^{1/k}$ such heavy values $h$,
and for each we assign exclusively $p' = p^{1-1/k}$ servers to compute the residual query 
$ q' = \bigwedge_{i=2}^k S_{1,i}'(x_i), \bigwedge_{2\le i<j \le k} S_{i,j}(x_i, x_j)$.
To compute this query, we will need $k-1$ rounds. In the first round, we will perform in parallel $k-1$ semi-joins to join each unary relation $S'_{1,i}(x_i)$ with one binary relation that contains the variable $x_i$. From Corollary~\ref{lem:semi-join}, this requires a load of $\tilde{O}(M/p')
 = \tilde{O}(M/p^{1-1/k})$, which is less than $\tilde{O}(M/p^{2/k})$ for $k \geq 3$. The resulting
 query is $q'' = K_{k-1}$. Using the inductive hypothesis, we can compute $q''$ with $p'$ servers in $k-2$ rounds with load $\tilde{O}(M/p'^{2/(k-1)})$. To conclude the proof, we calculate that $p'^{2/(k-1)} =  p^{(1-1/k) \cdot 2/(k-1)} = p^{2/k}$. 
\end{proof}

\subsubsection{Other Conjunctive Queries}

Let $q$ be a query that contains an atom $R$, such that $\vars{R} = \vars{q}$; in other words,
suppose that every variable in the body of $q$ appears in the atom $R$. We will show that we can 
compute any query $q$ with such a property in two rounds with optimal load $\tilde{O}(M/p)$.
Notice that the generalized semi-join query $W_k$ has this property, and hence we can
compute $W_k$ with load $\tilde{O}(M/p)$ using two rounds (whereas using one round the
load is $\Omega(M/p^{1/k})$).

\begin{lemma}
Let $q$ be a query that contains an atom $R$, such that $\vars{R} = \vars{q}$. 
Then, $q$ can be computed by an \mpc\ algorithm with  $\tilde{O}(M/p)$ load.
\end{lemma}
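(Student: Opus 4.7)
The plan is to exploit the fact that every variable of $q$ appears in $R$, which makes an output tuple of $q$ simply a tuple $t \in R$ such that $\pi_{\vars{S_j}}(t) \in S_j$ for every other atom $S_j$. Equivalently, $q(I) = \bigcap_{j\,:\,S_j \neq R} (R \ltimes S_j)$. The task therefore reduces to computing, for each atom $S_j \neq R$, the semi-join $T_j = R \ltimes S_j$ and then taking a multi-way intersection of these $k$-ary relations.

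In the first round I would compute all of the semi-joins $T_j$ in parallel, allocating $p/(\ell-1) = \Theta(p)$ servers to each (since $\ell$ is a constant). Each such semi-join is a one-sided-skew join problem: viewing the shared variables $\vars{S_j} \subseteq \vars{R}$ as a single composite key, the relation $S_j$ is a set on this key and therefore every value of the key occurs at most once in $S_j$, well below the $m/p$ skew threshold. Consequently, a direct generalization of Lemma~\ref{lem:join:1skew} (and hence of Corollary~\ref{lem:semi-join}) to composite join keys---obtained by hashing on the tuple of $\vars{S_j}$-values and repeating the same heavy-hitter case analysis purely on the $R$-side---computes each $T_j$ in a single round with load $\tilde{O}(M/p)$.

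In the second round I would compute the intersection $\bigcap_j T_j$ by hashing every tuple on the values of all $k$ variables and sending it to the server determined by the hash. Since each $T_j$ is a subset of $R$ it has at most $M$ bits, and the hashing distributes each $T_j$ uniformly, so each server receives $\tilde{O}(M/p)$ bits in total (the factor $\ell$ is absorbed into the $\tilde{O}$ notation, as in the intersection step of the Loomis-Whitney algorithm). Each server then locally outputs exactly those tuples that appear in every one of the $\ell-1$ received relations $T_j$, which by the reformulation above is precisely the subset of $q(I)$ falling into its hash bucket.

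The main obstacle is justifying the multi-variable version of Lemma~\ref{lem:join:1skew}, since that lemma is stated for a single join variable $z$. The argument carries over essentially verbatim: hash on the composite key $\vars{S_j}$ using independent hashes on each of its components as in the HC algorithm, then apply Lemma~\ref{lem:hashing} with the frequency bound $m_{S_j}(\bar{y}) \leq 1 \leq m/p$ for every value $\bar{y}$ of the composite key, and observe that the ``heavy-hitter'' case of the proof of Lemma~\ref{lem:join:1skew} is vacuous on the $S_j$ side because $S_j$ is a set. This yields the $\tilde{O}(M/p)$ load for round one, which together with round two establishes the lemma.
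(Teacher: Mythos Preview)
Your proposal is correct and follows essentially the same two-round strategy as the paper: compute each semi-join $q_j = R \ltimes S_j$ in the first round via (the composite-key version of) Corollary~\ref{lem:semi-join}, then intersect the resulting $k$-ary relations in the second round. The paper's proof is terser---it simply invokes Corollary~\ref{lem:semi-join} without re-justifying the multi-attribute case here, having already noted in the Loomis--Whitney discussion that the corollary extends by hashing on all joining variables---but the content is the same.
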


The algorithm works as follows. In the first round, we compute in parallel the subqueries 
$q_j = R(x_1, \dots, x_k),  S_j(\dots)$, for every relation $S_j$ that is not $R$. Since each
$q_j$ is a semi-join query, we can apply Corollary~\ref{lem:semi-join} and thus compute all
$q_j$'s in one round with load $\tilde{O}(M/p)$. 
In the second round, we compute the query $\bigwedge_j q_j$, which is an intersection
among the $\ell-1$ intermediate relations $q_j$, where each relation has size at most $m$ tuples. The latter computation can be performed in one round with load $O(M/p)$.

\section{Applications to the External Memory Model}
\label{sec:ex-memory}

In the external memory model, we model computation in the setting where the input data does
not fit into main memory, and the dominant cost is reading the data from the disk into the memory
and writing data on the disk.

Formally, we have an external memory (disk) of unbounded size, and an internal memory (main memory) that consists of $W$ {\em words}.\footnote{The size of the main memory is typically denoted by $M$, but we use $W$ to distinguish from the relation size in the previous sections.} The processor can only use data stored in the internal memory to perform computation, and data can be moved between the two memories in blocks of $B$ consecutive words. The {\em I/O complexity} of an algorithm is the number of input/output blocks that are moved during the algorithm, both from the internal memory to the external memory, and vice versa.

The external memory model has been recently used in the context of databases to 
analyze algorithms for large datasets that do not fit in the main memory, 
with the main application being {\em triangle listing}
\cite{ChuC12,  HuTC13, PaghS14, HuQT15a}.
In this setting, the input is an undirected graph, and the 
goal is to list all triangles in the graph. In~\cite{PaghS14} and~\cite{HuQT15a}, 
the authors consider the related problem of {\em triangle enumeration}, where instead 
of listing triangles (and hence writing them to the external memory), for each triangle in the 
output we call an $emit()$ function. The best result comes from~\cite{HuQT15a}, 
where the authors design a deterministic algorithm that enumerates triangles in 
$O(|E|^{3/2}/(\sqrt{W}B))$ I/Os, where $E$ is the number of edges in the graph. The
authors in~\cite{HuQT15a} actually consider a more general class of join problems, the
so-called {\em Loomis-Whitney enumeration}. In~\cite{Silvestri14},
the author presents external memory algorithms for enumerating subgraph patterns
in graphs other than triangles. More recent work~\cite{HY16} extends the study of I/O complexity to the class of acyclic conjunctive queries (which includes the line query and the star join). The authors present worst-case I/O-optimal algorithms for all acyclic queries when the relations have the same size, and also show optimality for some queries (e.g. $L_3, L_4$) when the sizes are not the same.

The problem we consider in the context of external memory algorithms is a generalization of
triangle enumeration. Given a full conjunctive query $q$, we want to {\em enumerate} all
possible tuples in the output, by calling the $emit()$ function for each tuple in the output
of query $q$. We assume that each tuple in the input can be represented by a single word.

\subsection{Simulating an \mpc\ Algorithm}

We will show how a parallel algorithm in the tuple-based \mpc\ model can help us
construct an external memory algorithm. The {\em tuple-based \mpc\ model} is a restriction of
the \mpc\ model, where only tuples from subqueries of $q$ can be communicated, and
moreover the communication can take a very specific form:  each tuple
$t$ during round $k$ is sent to a set of servers $\mathcal{D}(t,k)$, 
where $\mathcal{D}$ depends only on the data statistics that are initially available to the algorithm.  
Such statistical information is the size of the 
relations, or information about the heavy hitters in the data.\footnote{Even if this
information is not available initially to the algorithm, we can easily obtain it by performing
a single pass over the input data, which will cost $O(|I|/B)$ I/Os.} 
All of the algorithms that we have presented so far in the previous sections satisfy the above assumption. 

The idea behind the construction is that the distribution of the data to the servers can be used
to decide which input data will be loaded into memory; hence, the load $L$ will correspond
to the size of the internal memory $W$. Similarities between hash-join algorithms used for 
parallel processing and the variants of hash-join used for out-of-core processing have
been already known, where the common theme is to create partitions and then process 
them one at a time. Here we generalize this idea to the processing of any conjunctive query
in a rigorous way. We should also note that previous work~\cite{PEM} has studied the simulation
of MapReduce algorithms on a parallel external memory model.

Let $\mA$ be a tuple-based \mpc\  algorithm that computes query $q$ over input $I$ using
$r$ rounds with  load $L(I,p)$. We show next how to construct an external
memory algorithm $\mB$ based on the algorithm $\mA$.

\introparagraph{Simulation}
The external memory algorithm $\mB$ simulates the computation of algorithm $\mA$
during each of the $r$ rounds: round $k$, for $k=1, \dots, r$ simulates the total computation 
of the $p$ servers during round $k$ of $\mA$. We pick a parameter $p$ for the number of
servers that we show how to compute later. The algorithm will store tuples of the form $(t,s)$
to denote that tuple $t$ resides in server $s$.

To initialize $\mB$, we first assign the input data to the $p$ servers (we can
do this in any arbitrary way, as long as the data is equally distributed). 
More precisely, we read each tuple $t$ of the input relations and
then produce a tuple $(t,s)$, where $s= 1, \dots, p$ in a round-robin fashion, such that in the
end each server is assigned $|I|/B$ data items. To achieve this, we load each relation in 
chunks of size $B$ in the memory.
After the initialization, the algorithm $\mB$, for each round $k=1, \dots, r$, 
performs the following steps:

\begin{enumerate}
\item All tuples, which will be of the form $(t,s)$, are sorted according to the
 attribute $s$.
\item All tuples are loaded in memory in chunks of size $W$, in the order by which they 
were sorted in the external memory. If we choose $p$ such that $r \cdot L(I,p) \leq W$, we can
fit in the internal memory all the tuples of any server $s$ at round $k$. 
\footnote{The quantity $L(I,p)$ measures
the maximum amount of data received during any round. Since data is not
destroyed, over $r$ rounds a server can receive as much as $r \cdot L(I,p)$ data.
All of this data must fit into the memory of size $W$, since the decisions of each
server depend on all the data received.}
Hence, we first read into
the internal memory the tuples for server 1, then server 2, and so on.
For each server $s$, we replicate in the internal memory the execution of algorithm $\mA$ in server $s$ at round $k$.
\item For each tuple $t$ in server $s$ (including the ones that are newly produced), we
compute the tuples $\setof{(t,s')}{s' \in \mathcal{D}(t,k)}$, and we write them into the external
memory in blocks of size $B$.
\end{enumerate}

In other words, writing to the internal and external memory simulates the 
communication step, where data is exchanged between servers. The algorithm $\mB$ produces
the correct result, since by the choice of $p$ we guarantee that we can load
enough data in the memory to simulate the local computation of $\mA$ at each server.
Observe that we do not need to write the final result back to the external memory, since at the end
of the last round we can just call $emit()$ for each tuple in the output.

Let us now identify the choice for $p$; recall that we must make sure that $r \cdot L(I,p) \leq W $.
Hence, we must choose $p_o$ such that $ p_{o} = \min_{p} \{ L(I,p) \leq W/r \}$.
We next analyze the I/O cost of algorithm $\mB$ for this choice of $p_o$. 

\introparagraph{Analysis}
The initialization I/O cost for the algorithm is $|I|/B$. 
To analyze the cost for a given round $k=1, \dots, r$, we will measure first the size of the data that will be sorted and then
loaded into memory at round $k$. For this, observe that at every round of algorithm $\mB$,
the total amount of data that is communicated is at most $p_o \cdot L(I, p_o)$. Hence, the total
amount of data that will be loaded into memory will be at most 
$k \cdot p_o \cdot L(I, p_o) \leq p_o W $, from our definition of $p_o$.

For the first step that requires sorting the data, we will not use a sorting algorithm, but instead we will
partition the data into $p$ parts, and then concatenate the parts (this is possible only if $p_o$ is smaller than the memory $W$, i.e. it must be $p_o \leq W$). 
We can do this with a cost of $O(p_o W / B)$ I/Os. The second step of
loading the tuples into memory has a cost of $p_o W /B$, since we are loading the data using
chunks of size $B$; we can do this since the data has been sorted according to the destination
server. As for the third step of writing the data into the external memory, observe that the total
number of tuples written will be equal to the number of tuples communicated to the servers
at round $k+1$, which will be at most $p_o L(I,p_o) \leq p_o W/r$. Hence, the I/O cost will be
$p_o W / (r B)$.

Summing the I/O cost of all three steps over $r$ rounds, 
we obtain that the I/O cost of the constructed algorithm $\mB$ will be:
\begin{align*}
 O \left(  \frac{|I|}{B} + \sum_{k=1}^r \left( \frac{ p_o W}{B} +  \frac{p_o W}{r B}  \right) \right) 
 =  O \left( \frac{|I|}{B} + \frac{rp_o W }{B} \right)
\end{align*}

We have thus proved the following theorem:

\begin{theorem} \label{thm:simulation}
Let $\mA$ be a tuple-based \mpc\ algorithm that computes query $q$ over input $I$ using
$r$ rounds with load $L(I,p)$. For internal memory size $W$, let $ p_{o} = \min_{p} \{ L(I,p) \leq W/r \}$.
If $W \geq p_o$, then there exists an external memory algorithm $\mB$ that
computes $q$ over the same input $I$ with  I/O cost:
$$ O \left( \frac{|I|}{B} + \frac{r p_o W }{B} \right)$$
\end{theorem}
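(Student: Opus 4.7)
The plan is to follow exactly the simulation template outlined just before the theorem statement, choose $p_o$ as small as possible subject to the memory constraint, and then carefully count I/Os round by round. First I would fix $p_o = \min_p \{L(I,p) \leq W/r\}$, which guarantees that the total data a single simulated server ever receives over all $r$ rounds (at most $r \cdot L(I,p_o)$) fits inside internal memory of size $W$; this is crucial because the local computation at each server at round $k$ may depend on everything it has received in previous rounds. The initialization step (distributing $I$ round-robin among the $p_o$ conceptual servers) costs $O(|I|/B)$ I/Os since we stream through the input in blocks of size $B$.

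Next I would analyze one round $k$ of the simulation. The total volume of data tagged $(t,s)$ in external memory at round $k$ is bounded by $k \cdot p_o \cdot L(I,p_o) \leq p_o W$, since at each round algorithm $\mA$ communicates at most $p_o \cdot L(I,p_o)$ tuples and nothing is destroyed. To group this data by destination server $s$, rather than pay the sorting overhead, I would use the hypothesis $W \geq p_o$ to do a single bucket-partition pass: read the data in blocks of $B$, keep one output block per server in internal memory, and flush as buckets fill. This gives a cost of $O(p_o W /B)$ I/Os per round. Then loading the sorted data back into memory, server by server, is another $O(p_o W / B)$ I/Os, and within memory we faithfully replay the computation that $\mA$ performs at each server during round $k$. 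Finally, the tuples newly produced for round $k+1$ are written back in blocks of size $B$ with total volume at most $p_o \cdot L(I,p_o) \leq p_o W / r$, contributing $O(p_o W / (rB))$ I/Os.

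Summing the three per-round costs over $r$ rounds yields $\sum_{k=1}^r O(p_o W/B + p_o W/(rB)) = O(r p_o W /B)$, and together with the initialization cost the total is
\[
O\!\left(\frac{|I|}{B} + \frac{r p_o W}{B}\right),
\]
as claimed. Correctness follows because for each simulated server $s$ we load into internal memory exactly the tuples that $\mA$ would have sent to server $s$ through rounds $1,\dots,k$, which by the tuple-based assumption is determined solely by $\mathcal{D}(\cdot,k)$ and by initial statistics; so the local output of $\mA$ at server $s$ in round $k$ is reproduced verbatim, and after round $r$ we simply call $emit()$ on each output tuple in place of writing it to disk.

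The main obstacle I anticipate is the ``sorting'' step, which looks expensive in the standard external-memory cost model. The key observation that makes the proof go through is that one does not need to sort: the destination labels live in the small range $\{1,\dots,p_o\}$, and since $p_o \leq W$ a single bucket-partition pass suffices. Without the hypothesis $W \geq p_o$, this trick fails and one would have to fall back to general external-memory sorting, which would introduce an extra logarithmic factor; so verifying that this condition is indeed what lets us avoid the log is the critical part of the argument.
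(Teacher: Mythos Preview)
Your proposal is correct and follows essentially the same argument as the paper: the same choice of $p_o$, the same three-step per-round simulation, the same bucket-partition trick in place of sorting (justified by $W \geq p_o$), and the same summation of costs. You even isolate the exact role of the hypothesis $W \geq p_o$ in avoiding the sorting logarithm, which is precisely the point the paper makes.
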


We can simplify the above I/O cost further in the context of 
computing conjunctive queries. In all of our algorithms we used a constant number of rounds
$r$, and the load is typically $L(I,p) \geq |I|/p$. Then, we can rewrite the I/O cost as
$O \left( p_o W / B \right)$.

We can apply~\autoref{thm:simulation} to any of the optimal multi-round algorithms we
presented in the previous sections, and obtain state-of-the-art external memory algorithms
for several classes of conjunctive queries. We show next two applications of our results.


\begin{example}
We presented a 2-round algorithm that computes triangles for
any input data with load (in tuples) $L = \tilde{O}(m/p^{3/2})$, 
in the case where all relations have size $m$.
By applying~\autoref{thm:simulation}, we obtain an external memory algorithm that
computes triangles with $\tilde{O}(m^{3/2}/(BW^{1/2}))$ I/O cost for any $W \geq m^{2/5}$. Notice that this cost
matches the I/O cost for triangle computation from~\cite{PaghS14} up to polylogarithmic
factors.
\end{example}

\begin{example}
We presented a constant-round algorithm that computes any line query $L_k$ with load (in tuples) $L = \tilde{O}(m/p^{1/\lceil (k+1)/2\rceil})$, 
in the case where all relations have size $m$.
By applying~\autoref{thm:simulation}, we obtain an external memory algorithm that
computes the line query $L_k$ with $\tilde{O} \left( \left(\frac{m}{W}\right)^{\lceil (k+1)/2\rceil} \frac{W}{B} \right)$ I/O cost. The cost obtained matches the I/O cost of the algorithm presented in~\cite{HY16} for line queries for relations with equal sizes. 
\end{example}

\section{Conclusion}
\label{sec:conclusion}

In this work, we present the first worst-case analysis for parallel algorithms
that compute conjunctive queries, using the \mpc\ model as the theoretical framework
for the analysis. We also show an interesting connection with the external memory computation model, which 
allows us to translate many of the techniques from the parallel setting to obtain
algorithms for conjunctive queries with (almost) optimal I/O cost. 

The central remaining open question is to design worst-case optimal algorithms for multiple rounds 
for any conjunctive query. We also plan to investigate further the connection between the parallel
setting and external memory setting. It is an interesting question whether our
techniques can lead to optimal external memory algorithms for any conjunctive query, and also 
whether we can achieve a reverse simulation of external memory algorithms in the \mpc\ model.

\noindent \textbf{Acknowledgements.} We would like to thank Ke Yi for pointing out an error in an earlier version of this paper regarding the computation of the edge quasi-packing of the query $L_k$.

\bibliographystyle{plain}
\bibliography{bib}

\begin{thebibliography}{10}

\bibitem{DBLP:journals/corr/abs-1206-4377}
F.~N. Afrati, A.~D. Sarma, S.~Salihoglu, and J.~D. Ullman.
\newblock Upper and lower bounds on the cost of a map-reduce computation.
\newblock {\em CoRR}, abs/1206.4377, 2012.

\bibitem{DBLP:conf/edbt/AfratiU10}
Foto~N. Afrati and Jeffrey~D. Ullman.
\newblock Optimizing joins in a map-reduce environment.
\newblock In {\em EDBT}, pages 99--110, 2010.

\bibitem{DBLP:conf/focs/AtseriasGM08}
A.~Atserias, M.~Grohe, and D.~Marx.
\newblock Size bounds and query plans for relational joins.
\newblock In {\em FOCS}, pages 739--748, 2008.

\bibitem{BKS13}
Paul Beame, Paraschos Koutris, and Dan Suciu.
\newblock Communication steps for parallel query processing.
\newblock In {\em PODS}, pages 273--284, 2013.

\bibitem{BKS14}
Paul Beame, Paraschos Koutris, and Dan Suciu.
\newblock Skew in parallel query processing.
\newblock In {\em PODS}, pages 212--223, 2014.

\bibitem{ChuC12}
Shumo Chu and James Cheng.
\newblock Triangle listing in massive networks.
\newblock {\em {TKDD}}, 6(4):17, 2012.

\bibitem{DBLP:conf/osdi/DeanG04}
Jeffrey Dean and Sanjay Ghemawat.
\newblock Mapreduce: Simplified data processing on large clusters.
\newblock In {\em OSDI}, pages 137--150, 2004.

\bibitem{DBLP:journals/talg/FeldmanMSSS10}
J.~Feldman, S.~Muthukrishnan, A.~Sidiropoulos, C.~Stein, and Z.~Svitkina.
\newblock On distributing symmetric streaming computations.
\newblock {\em ACM Transactions on Algorithms}, 6(4), 2010.

\bibitem{PEM}
Gero Greiner and Riko Jacob.
\newblock The efficiency of mapreduce in parallel external memory.
\newblock In {\em Proceedings of the 10th Latin American International
  Conference on Theoretical Informatics}, LATIN'12, pages 433--445, Berlin,
  Heidelberg, 2012. Springer-Verlag.

\bibitem{myria}
Daniel Halperin, Victor~Teixeira de~Almeida, Lee~Lee Choo, Shumo Chu, Paraschos
  Koutris, Dominik Moritz, Jennifer Ortiz, Vaspol Ruamviboonsuk, Jingjing Wang,
  Andrew Whitaker, Shengliang Xu, Magdalena Balazinska, Bill Howe, and Dan
  Suciu.
\newblock Demonstration of the {Myria} big data management service.
\newblock In Curtis~E. Dyreson, Feifei Li, and M.~Tamer {\"{O}}zsu, editors,
  {\em International Conference on Management of Data, {SIGMOD} 2014, Snowbird,
  UT, USA, June 22-27, 2014}, pages 881--884. {ACM}, 2014.

\bibitem{HY16}
Xiao Hu and Ke~Yi.
\newblock Towards a worst-case {I}/{O}-optimal algorithm for acyclic joins.
\newblock In {\em PODS}, 2016.

\bibitem{HuQT15a}
Xiaocheng Hu, Miao Qiao, and Yufei Tao.
\newblock Join dependency testing, {Loomis-Whitney} join, and triangle
  enumeration.
\newblock In {\em Proceedings of the 34th {ACM} Symposium on Principles of
  Database Systems, {PODS} 2015, Melbourne, Victoria, Australia, May 31 - June
  4, 2015}, pages 291--301, 2015.

\bibitem{HuTC13}
Xiaocheng Hu, Yufei Tao, and Chin{-}Wan Chung.
\newblock Massive graph triangulation.
\newblock In {\em Proceedings of the {ACM} {SIGMOD} International Conference on
  Management of Data, {SIGMOD} 2013, New York, NY, USA, June 22-27, 2013},
  pages 325--336, 2013.

\bibitem{DBLP:conf/soda/KarloffSV10}
H.~J. Karloff, S.~Suri, and S.~Vassilvitskii.
\newblock A model of computation for mapreduce.
\newblock In {\em SODA}, pages 938--948, 2010.

\bibitem{Klauck15}
Hartmut Klauck, Danupon Nanongkai, Gopal Pandurangan, and Peter Robinson.
\newblock Distributed computation of large-scale graph problems.
\newblock In {\em Proceedings of the Twenty-Sixth Annual ACM-SIAM Symposium on
  Discrete Algorithms}, SODA '15, pages 391--410. SIAM, 2015.

\bibitem{DBLP:conf/pods/KoutrisS11}
P.~Koutris and D.~Suciu.
\newblock Parallel evaluation of conjunctive queries.
\newblock In {\em PODS}, pages 223--234, 2011.

\bibitem{dremel}
S.~Melnik, A.~Gubarev, J.~J. Long, G.~Romer, S.~Shivakumar, M.~Tolton, and
  T.~Vassilakis.
\newblock Dremel: Interactive analysis of web-scale datasets.
\newblock {\em PVLDB}, 3(1):330--339, 2010.

\bibitem{NPRR12}
H.~Q. Ngo, E.~Porat, C.~R{\'e}, and A.~Rudra.
\newblock Worst-case optimal join algorithms: [extended abstract].
\newblock In {\em PODS}, pages 37--48, 2012.

\bibitem{PaghS14}
Rasmus Pagh and Francesco Silvestri.
\newblock The input/output complexity of triangle enumeration.
\newblock In {\em PODS}, pages 224--233, 2014.

\bibitem{Silvestri14}
Francesco Silvestri.
\newblock Subgraph enumeration in massive graphs.
\newblock {\em CoRR}, abs/1402.3444, 2014.

\bibitem{WZ13}
DavidP. Woodruff and Qin Zhang.
\newblock When distributed computation is communication expensive.
\newblock In Yehuda Afek, editor, {\em Distributed Computing}, volume 8205 of
  {\em Lecture Notes in Computer Science}, pages 16--30. Springer Berlin
  Heidelberg, 2013.

\bibitem{spark:2012}
M.~Zaharia, M.~Chowdhury, T.~Das, A.~Dave, J.~Ma, M.~McCauley, M.~J. Franklin,
  S.~Shenker, and I.~Stoica.
\newblock Resilient distributed datasets: a fault-tolerant abstraction for
  in-memory cluster computing.
\newblock In {\em NSDI}, 2012.

\end{thebibliography}

\appendix

\newpage
\section{Appendix A: Proof of One-Round Optimal Lower Bound}
\label{sec:appendix}

\begin{proof}[Proof of~\autoref{th:lower:skew}]
We construct a probability space for instances $I$ defined by $\bM$ and the
choice of $\bx$ as follows. 
First, let $\alpha$ be a constant value.
We pick the domain $n$ such that $n = (\max_j \{m_j\})^2$.
For each relation $S_j$, we construct an instance by choosing a uniformly
random instance over all matching relations on the variables in
$\bx_j = \vars{S_j} \setminus \bx$, while fixing the variables $\bx$ to the constant $\alpha$.
In this construction, for every $\ba_j \in [n]^{\bx_j}$, the probability that $S_j$ contains
$\ba_j$ is $P(\ba_j \in S_j) = m_j / n^{a_j-d_j}$. If $a_j = d_j$, and thus $\bx$ includes all
variables from $S_j$, then the instance contains the single tuple $(\alpha, \dots, \alpha)$,
and we also fill the instance with arbitrary tuples of fresh values.

We can show that the expected number of answers will be
$$ \E[|q(I)|] =   n^{k-d} \prod_{j: S_j \in q_{\bx} } \frac{m_j}{ n^{a_j-d_j}}$$

Let us fix some server and let $\msg(I)$ be the message the server receives on input $I$.
Let $K_{\msg_j}(S_j)$ denote the set of tuples from relation $S_j$
{\em known by the server}. Let $w_j({\ba_j}) = P(\ba_j \in K_{\msg_j(S_j)}(S_j))$, 
where the probability is over the random choices of $S_j$.  This is upper bounded
by $P(\ba_j \in S_j)$: $w_j(\ba_j) \leq m_j / n^{a_j-d_j}$.

\begin{lemma}
\label{lem:L-bound}
Let $S_j$ a relation with $a_j >d_j$.
Suppose that the size of $S_j$ is $m_j \leq n/2$ (or $m_j = n$), and that the 
message $\msg_j(S_j)$ has at most $L$ bits. Then,
$\E[|K_{\msg_j}(S_j)|] \leq \frac{4 L}{(a_j-d_j) \log(n)}$.
\end{lemma}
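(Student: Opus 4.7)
The plan is an entropy/counting argument that exploits the fact that $K := K_{\msg_j}(S_j)$ is a deterministic function of the $L$-bit message $\msg_j(S_j)$, and hence takes at most $2^{L+1}$ distinct values. Writing $r := a_j - d_j$, the random relation $S_j$ is uniform over matching relations of size $m_j$ on $r$ columns with domain $[n]$, of which there are $N_m = \binom{n}{m_j}^r (m_j!)^{r-1}$. Applying the entropy chain rule,
\[
\log N_m \,=\, H(S_j) \,\leq\, H(K) + H(S_j \mid K) \,\leq\, (L+1) + \E[\log N_{m,|K|}],
\]
where $N_{m,k} = \binom{n-k}{m_j - k}^r ((m_j-k)!)^{r-1}$ is the number of matching relations of size $m_j$ containing a fixed partial matching of size $k$ (which depends only on $|K|$ by symmetry). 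Rearranging yields $L + 1 \geq \E[f(|K|)]$, where $f(k) := \log(N_m/N_{m,k})$.

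A direct algebraic simplification gives $f(k) = r \log (n)_k - \log (m_j)_k$, with $(x)_k = x(x-1)\cdots(x-k+1)$ the falling factorial. Using $(n)_k \geq (n/2)^k$ (valid since $k \leq m_j \leq n/2$), together with the hypothesis $n = (\max_j m_j)^2$ (so $m_j \leq \sqrt{n}$ and $\log m_j \leq (\log n)/2$), and $(m_j)_k \leq m_j^k$, I get $f(k) \geq k\bigl[r \log(n/2) - (\log n)/2\bigr]$. For sufficiently large $n$, this is at least $rk(\log n)/4$, and combining with the entropy inequality delivers the desired bound $\E[|K|] \leq 4L/(r \log n) = 4L/((a_j-d_j)\log n)$. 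The case $m_j = n$ is handled analogously by observing that $S_j$ is then a random full matching, so $N_n = (n!)^{r-1}$ and $N_{n,k} = ((n-k)!)^{r-1}$; this yields $f(k) = (r-1)\log (n)_k \geq (r-1)k (\log n)/2$ for $k \leq n/2$, matching the claimed order (the case $r = 1, m_j = n$ is degenerate since $S_j$ is then deterministic).

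The main obstacle is ensuring the constant factor of $4$ holds uniformly, in particular tracking what happens when $|K|$ is comparable to $m_j$ so that $(m_j)_k$ becomes small in the lower bound for $f(k)$. Monotonicity of $f$ in $k$ (verified via the ratio $N_{m,k}/N_{m,k-1} = (m_j-k+1)/(n-k+1)^r \leq 1$) lets the linear lower bound on small $k$ propagate, but in the boundary regime one may need to split into $|K| \leq m_j/2$ versus $|K| > m_j/2$ and handle the latter via a tail argument. The crucial hypothesis throughout is $n = (\max_j m_j)^2$, which yields $\log(n/m_j) \geq (\log n)/2$ and is exactly what produces the $r \log n$ factor in the denominator of the claimed bound.
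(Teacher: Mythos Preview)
Your argument is correct and follows the same entropy chain-rule skeleton as the paper: write $H(S_j)\le H(\text{message})+H(S_j\mid\text{what is revealed})$, bound the first term by $L$, and show the second term drops by roughly $\tfrac{r}{4}\log n$ per known tuple. The difference is only in how that drop is justified. The paper imports a residual-entropy inequality from~\cite{BKS14} of the form $H(S_j\mid \msg_j)\le (1-|K_{\msg_j}|/(2m_j))\,\mathcal{M}_j$ and then uses $\mathcal{M}_j\ge \tfrac{m_j(a_j-d_j)}{2}\log n$; you instead condition on $K$ directly, count extensions of a size-$k$ partial matching to get $f(k)=r\log(n)_k-\log(m_j)_k$, and lower-bound $f$ by elementary estimates. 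Your route is more self-contained and makes the role of the hypothesis $n=(\max_j m_j)^2$ (hence $m_j\le\sqrt n$) completely transparent: it is exactly what turns $\log(n/m_j)$ into $\tfrac12\log n$ and produces the factor $(a_j-d_j)\log n$ in the denominator. Note that this hypothesis is slightly stronger than the $m_j\le n/2$ stated in the lemma; it is available from the enclosing proof of \autoref{th:lower:skew}, and the paper's argument in fact relies on it too through the~\cite{BKS14} bound. Your treatment of the boundary case $m_j=n$ (full matchings, $N_n=(n!)^{r-1}$) and the observation that $r=1,\,m_j=n$ is degenerate are also correct.
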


\begin{proof}
We can express the entropy $H(S_j)$ as follows:
\begin{align} \label{eq:entropy:one}
 H(S_j) &=H(\msg_j(S_j))+ \sum_{\msg_j}P(\msg_j(S_j)=\msg_j) \cdot H(S_j \mid \msg_j(S_j)=\msg_j) \nonumber\\
 &\le L +  \sum_{\msg_j}P(\msg_j(S_j)=\msg_j) \cdot H(S_j \mid \msg_j(S_j)=\msg_j) 
\end{align}

Denoting by $\mathcal{M}_j$ the number of bits
necessary to represent  $S_j$, we have:
\begin{align*}
 H(S_j \mid \msg_j(S_j)=\msg_j)  
 & \leq \left(1-\frac{|K_{\msg_j}(S_j)|}{2 m_j} \right) \mM_j\\
 & \leq H(S_j) - \frac{|K_{\msg_j}(S_j)|}{2 m_j} m_j \frac{a_j-d_j}{2} \log (n)  \\ 
 & = H(S_j) - (1/4) \cdot |K_{\msg_j}(S_j)| (a_j-d_j)\log (n) 
\end{align*}
where the last inequality comes from~\cite{BKS14}.
Plugging this in~\eqref{eq:entropy:one}, and solving for $\E[|K_{\msg_j}(S_j)|] $:
\begin{align*}
\E[|K_{\msg_j}(S_j)|]  \leq \frac{4 L}{(a_j-d_j)\log(n)}
\end{align*}
This concludes our proof.
\end{proof}

Define the {\em extended query} ${q_{\bx}}'$ to consist of $q_\bx$, where we add a
new atom $S_i'(x_i)$ for every variable $x_i \in
vars(q_\bx)$.  Define $u'_i = 1 - \sum_{j: i \in S_j} u_j$.
In other words, $u'_i$ is defined to be the slack at the variable
$x_i$ of the packing $\bu$.  The new edge packing
$(\mathbf{u},\mathbf{u}')$ for the extended query $q_\bx'$ has no more
slack, hence it is both a tight fractional edge packing and a tight
fractional edge cover for $q_{\bx}$.  By adding all equalities of the
tight packing we obtain:
$$\sum_{j=1}^{\ell} (a_j-d_j) u_j + \sum_{i=1}^{k-d} u'_i = k-d$$

We next compute how many output tuples from $q(I)$ will be known
in expectation by the server. We have:
\begin{align*}
\E[|K_{\msg}(q(I))|] 
& = \sum_{\ba \in [n]^d} \prod_{j: S_j \in q_\bx} w_j({\ba_j}) \\
& =  \sum_{\ba \in [n]^d} \prod_{j: S_j \in q_\bx} w_j({\ba_j})
         \prod_{i=1}^{k-d} w_i'({\ba_i})\\      
 & \leq   \prod_{i=1}^{k-d} n^{u_i'} \cdot 
  \prod_{j: S_j \in q_\bx} \left( \sum_{\ba_j \in [n]^{d_j}} w_j({\ba_j})^{1/u_j} \right)^{u_j} 
\end{align*}
By writing $w_j(\ba_j)^{1/u_j} = w_j(\ba_j)^{1/u_j-1} w_j({\ba_j})$,  we can bound the sum in the above quantity as follows:
\begin{align*}
 \sum_{\ba_j \in [n]^{d_j}} w_j({\ba_j})^{1/u_j} 
 & \leq \left( \frac{m_j}{ n^{a_j-d_j}} \right)^{1/u_j-1} \sum_{\ba_j \in [n]^{d_j}} w_j({\ba_j}) 
\quad  = (m_j   n^{d_j-a_j})^{1/u_j-1}  L_j
\end{align*}
where $L_j = \sum_{\ba_j \in [n]^{d_j}} w_j({\ba_j})$. 
We can now write:
\begin{align}
\E[|K_{\msg}(q(I))|] 
& \leq  n^{\sum_{i=1}^{k-d} u_i'}  \prod_{j:S_j \in q_\bx} \left( L_{j}   m_j^{1/u_j-1}  n^{(d_j-a_j)(1/u_j-1)} \right)^{u_j}  \nonumber\\
& =  \prod_{j:S_j \in q_\bx} L_{j}^{u_j} \cdot \prod_{j:S_j \in q_\bx} m_j^{-u_j} \cdot \E [|q(I)|] \\
& \leq \prod_{j:S_j \in q_\bx} \left( \frac{4L}{m_j (a_j -d_j) \log(n)} \right)^{u_j} \cdot \E [|q(I)|] 
\label{eq:lastline}
\end{align}

Using the fact that $M_j = a_j m_j \log(n)$, and that the $p$ servers produce all answers,
we obtain that, in order to obtain all answers, we must have
$$
 L \geq  \min_j \dfrac{a_j-d_j}{4 a_j}  \cdot \left( \frac{\prod_{j:S_j \in q_\bx}  M_j^{u_j}}{p} \right)^{1/\sum_j u_j}  
$$
which concludes the proof. 
\end{proof}

\end{document}